\documentclass[final]{alt2027} % camera-ready template (de-anonymized)

\title[On Collusion-Free Teaching in Planar Graphs]{On Collusion-Free Teaching in Planar Graphs}

\usepackage{times}
\usepackage{custom}
\usepackage{todonotes}
\altauthor{%
 \Name{Liana Khazaliya} \Email{liana.khazaliya@aalto.fi}\\
 \addr Aalto University, Espoo, Finland\\
 \addr Helsinki Institute for Information Technology, Helsinki, Finland
 \AND
 \Name{Clemens Kuske} \Email{clemens.kuske@hpi.de}\\
 \addr Hasso Plattner Institute, Potsdam, Germany
}

\begin{document}

\maketitle

\begin{abstract}%
Non-clashing teaching minimizes teaching complexity among models satisfying the Goldman--Mathias collusion-avoidance criterion.
While the Vapnik--Chervonenkis dimension ($\operatorname{VCdim}$) measures the complexity of learning from samples, non-clashing teaching dimension ($\operatorname{NCTD}$) measures the number of examples required
when a teacher can select them.
Whether $\operatorname{NCTD}$ is always bounded above by $\operatorname{VCdim}$ remains a fundamental open question in learning theory.

In this paper, we investigate this question for the concept classes of closed neighborhoods in planar graphs.
We prove a tight upper bound of two on their $\operatorname{NCTD}$, thereby establishing the conjectured inequality for this family. 
For the positive variant $\operatorname{NCTD}^+$, in which teaching sets contain only positive examples, we obtain a tight upper bound of four. 
Finally, we show that deciding whether either $\operatorname{NCTD}^+$ or $\operatorname{NCTD}$ is at most one is \NP-complete, even for concept classes of closed neighborhoods in planar graphs.
In particular, despite the upper bound of two on $\operatorname{NCTD}$ for such classes, distinguishing between $\operatorname{NCTD}$ values of one and two remains \NP-hard.
\end{abstract}

\begin{keywords}
  Teaching Dimension, Non-Clashing Teaching, VC Dimension, Machine Teaching, Collusion-Freeness
\end{keywords}

\section{Introduction}
Machine teaching studies the problem of identifying a target concept through a carefully selected set of labeled examples.
Unlike traditional learning models, where examples are typically sampled from a distribution, machine teaching allows a teacher to choose examples specifically to facilitate learning. 
The objective is to minimize the number of examples required for the learner to identify
the target concept. 
A fundamental requirement is that the teacher and learner should not communicate information through unintended channels. 
The
most commonly accepted way to formalize this requirement was proposed by  \cite{GoldmanMathias1996} through a collusion-avoidance criterion.

Among a variety of formal models of teaching, for example, the
classical teaching dimension model~\cite{goldman1995complexity}, the optimal teacher model~\cite{balbach2008measuring}, or recursive teaching~\cite{zilles2011models}, we focus on
  \emph{non-clashing teaching} that has been introduced and studied in papers by  \cite{kirkpatrick2019optimal, fallat2023batch}, and subsequent papers by \cite{chalopin2024nonclashing, ganian2025computational, bhore2026nonclashing}.
  The non-clashing teaching model is known to be optimal among batch teaching models satisfying Goldman--Mathias criterion.
Its central requirement is that no two distinct concepts can be consistent with the union of their teaching sets.

Formally, let $\calC\subseteq 2^{\mathcal X}$ be a finite binary concept class over a domain $\mathcal X$.
A teaching map $T$ assigns to each concept $C\in\calC$ a set $T(C)\subseteq\mathcal X$ of examples, labeled according to their membership in $C$.
The map is \emph{non-clashing} if, for every pair of distinct concepts $C,C'\in\calC$, $$(T(C)\cup T(C'))\cap(C\triangle C')\neq\emptyset.$$
We call such a map a \emph{non-clashing teaching map} ($\nctm$).
Its size is $\max_{C\in\calC}|T(C)|$, and the \emph{non-clashing teaching dimension} $\nctd(\calC)$ is the minimum size of an $\nctm$ for $\calC$.
In the \emph{positive} variant, teaching sets contain only positive examples, that is, $T(C)\subseteq C$.
The corresponding dimension is denoted by $\nctd^+(\calC)$.

\paragraph{Non-clashing teaching and VC dimension.}
An important open question by~\cite{kirkpatrick2019optimal,fallat2023batch} is whether the non-clashing teaching dimension is always upper-bounded with the Vapnik--Chervonenkis dimension ($\operatorname{VCdim}$) by \cite{vapnik1971uniform}.
Recall that a set $S\subseteq\mathcal X$ is \emph{shattered} by $\calC$ if, for every subset $S'\subseteq S$, there exists a concept $C\in\calC$ such that $C\cap S=S'$; and $\operatorname{VCdim}(\calC)$ is the maximum size of a set shattered by $\calC$.

While $\operatorname{VCdim}$ measures the combinatorial complexity of learning from samples, $\nctd$ measures the number of examples required when a teacher can select them.
The inequality $\nctd(\calC)\leq\operatorname{VCdim}(\calC)$, conjectured in the works \cite{kirkpatrick2019optimal,fallat2023batch,chalopin2024nonclashing,bhore2026nonclashing} on non-clashing teaching, would therefore establish a fundamental connection between teaching and learning complexity.
Although the question remains open in general, investigating it for structured concept classes seems a natural way to approach it.  It is also motivated by the numerous works studying $\operatorname{VCdim}$ of closed neighborhoods in graphs \cite{ducoffe2020diameter, kranakis1997vc, chalopin2023sample, coudert2024practical}.
See~\cite{foucaud2025parameterized} for a list of references on other well-studied graph-based concept classes.

\paragraph{Teaching in graphs.}
Graph-based concept classes provide a natural setting for such an investigation.
On the one hand, they exhibit combinatorial and geometric structure that can be exploited
when constructing teaching maps, see, e.g., in \cite{ganian2025computational, bhore2026nonclashing}.
On the other hand, they are expressive enough to represent arbitrary finite binary concept classes.
Indeed, every finite binary concept class can be represented
as a selected family of closed neighborhoods in a split graph,
preserving its non-clashing teaching dimension.
Thus, studying such neighborhood families is as general as
studying arbitrary finite binary concept classes.

This observation motivates the study of neighborhoods
under structural restrictions on the underlying graph.
A related setting is the concept class of
graph balls, where $B_r(v)=\{u\in V(G):\dist(u,v)\leq r\}$, see, e.g. \cite{chepoi2007covering}.
\cite{chalopin2023sample, chalopin2024nonclashing, ganian2025computational} studied
non-clashing teaching for graph balls and established
the conjectured inequality for several graph classes,
including trees and interval graphs.
\cite{chalopin2024nonclashing, bhore2026nonclashing} also identified planar graphs as an interesting
direction, noting that their ball systems have $\operatorname{VCdim}$
at most four, while obtaining comparable bounds on $\nctd$
remained open.

Subsequent work by \cite{ganian2025computational} was focused on positive setting for
non-clashing teaching for graph balls.
More recently, \cite{bhore2026nonclashing}
studied the class of closed neighborhoods, obtaining
algorithmic and hardness results together with upper
bounds of five and seven for the general and positive
$\nctd$, respectively, in planar graphs.

\paragraph{Closed neighborhoods of planar graphs.}
In this paper, we study both the general and positive variants of non-clashing teaching for concept classes consisting of closed neighborhoods in planar graphs. The positive variant refers to teaching models that allow learning only from positive examples; see, e.g., \cite{denis2001learning, angluin1980inductive, schwab2000learning} and, for non-clashing teaching in particular, \cite{fallat2023batch,chalopin2024nonclashing,ganian2025computational}.

We refer to the corresponding
dimensions as \textsc{Neighborhoods-NCTD} (\nnctd) and
\textsc{Positive Neighborhoods-NCTD} (\nnctdplus), respectively.
The corresponding decision problems are defined as follows:

\defproblem{\textsc{Planar \nnctd}}
{A planar graph $G$, a family $\calN$ of closed neighborhoods in $G$,
and an integer $k$.}
{Is $\nnctdpar(\calN)\leq k$?}

\noindent
The problem \textsc{Planar \nnctdplus} is defined analogously for deciding on $\nnctdpar^+(\calN)$.

For two distinct neighborhoods $N[u],N[v]\in\calN$,
we say that a vertex $x\in T(N[u])\cup T(N[v])$
\emph{distinguishes} the pair if
$x\in N[u]\triangle N[v]$.
The pair is \emph{distinguished} if such a vertex exists.
If $N[u]=N[v]$, then $u$ and $v$ are called \emph{twins}
and represent the same concept. We can keep only one
vertex of each such pair, and so assume that a given graph $G$ is twin-free. Lastly, for a graph $G$, let $\calN_G=\{N[v]:v\in V(G)\}$.

\paragraph{Our results.}
We establish that, for every planar graph $G$, $\nnctdpar(\calN_G)\leq 2$, which is tight.
This improves the previously known upper bound of five, and establishes the conjectured inequality between $\nctd$ and $\operatorname{VCdim}$ for closed neighborhoods of planar graphs.
For \textsc{Planar} \nnctdplus, we prove the tight bound $\nnctdpar^+(\calN_G)\leq 4$,
improving the previous bound of seven. 
Finally, we prove that both \textsc{Planar} \nnctd and \textsc{Planar} \nnctdplus are \NP-complete even for $k=1$.
In particular, distinguishing between $\nctd$ one and two remains \NP-hard for closed neighborhoods of planar graphs.

\paragraph{Organization of the paper}
In \cref{subsec:upperpos}, we prove a tight upper bound of four
on $\nnctdpar^+(\calN_G)$ for planar graphs $G$.
In \cref{subsec:uppergen}, we establish a tight upper bound of two
for the general variant $\nnctdpar(\calN_G)$.
In \cref{sec:hardness}, we show that deciding whether either
dimension is at most one is \NP-complete, even for closed neighborhoods
of planar graphs.
Finally, we conclude in \cref{sec:conclusion}.

\section{Tight Upper Bounds}\label{sec:upperbounds}
In this section, we establish tight upper bounds on the non-clashing
teaching dimension of closed neighborhoods in planar graphs.
\subsection{Positive Non-Clashing Teaching Dimension}\label{subsec:upperpos}

The positive variant, being arguably more intuitive, is considered first. Our main result in this subsection is the following bound.
\begin{theorem}\label{thm:planar}
For every planar graph $G$, $\operatorname{N-NCTD}^{+}(\calN_G)\leq 4$.
\end{theorem}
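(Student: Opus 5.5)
We construct, for a twin-free planar graph $G$, a positive teaching map with $T(N[v])\subseteq N[v]$ and $|T(N[v])|\le 4$. The plan uses planarity twice: once through degeneracy (every planar graph has an ordering in which each vertex has at most five later neighbors, and an acyclic orientation of bounded out-degree), and once through the fact that planar graphs contain no $K_{3,3}$. The latter keeps the neighborhoods of two vertices from sharing many vertices in a rich pattern.

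\textbf{Reformulation.} The key observation is what positive non-clashing requires. For $u\neq v$ the pair is distinguished iff $T(N[u])\not\subseteq N[v]$ or $T(N[v])\not\subseteq N[u]$. So a clash means $T(N[u])\subseteq N[v]$ and $T(N[v])\subseteq N[u]$. Each teaching set $T_v$ should therefore be a small subset of $N[v]$ that is contained in few other closed neighborhoods. The initial choice is $T_v=\{v\}\cup S_v$, where $S_v$ consists of at most three neighbors. With $v\in T_v$, a clash between $u$ and $v$ forces $uv\in E$. It remains to handle adjacent pairs $u,v$ with $S_v\subseteq N[u]$ and $S_u\subseteq N[v]$.

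\textbf{Low degree and high degree.} For a vertex $v$ of degree at most $3$, take $T_v=N[v]$. Then $T_v\subseteq N[u]$ forces $N[v]\subseteq N[u]$, and the clash additionally needs $T_u\subseteq N[v]$.

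For the remaining vertices, choose $S_v$ to witness non-domination. For each neighbor $u$ with $N[v]\not\subseteq N[u]$, we would like some vertex of $S_v$ to lie outside $N[u]$. The planar tool here is that the common neighbors $N(u)\cap N(v)$ of an edge are structured, since the edge $uv$ together with its common neighbors forms a "fan". No $K_{3,3}$ means that for three neighbors $u_1,u_2,u_3$ of $v$, at most two vertices besides $v$ are adjacent to all three.

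From this we aim to choose three neighbors $S_v$ such that at most a bounded, structured set of neighbors $u$ of $v$ satisfy $S_v\subseteq N[u]$. Natural candidates are three neighbors that are pairwise far apart in the rotation system around $v$. Any $u$ adjacent to all of them, with $u\ne v$, is then one of at most two "apex" vertices. A clash then requires $u$ to be such an apex for $v$ and $v$ to be an apex for $u$. This mutual relation gives a graph $H$ of maximum degree at most $2$.

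\textbf{Breaking mutual apex pairs.} The final step handles pairs adjacent in $H$ and is the main obstacle. For each such pair we must modify one of the two teaching sets, swapping an element so that it escapes the partner's neighborhood, while keeping size $4$ and not creating new clashes. Since $H$ has maximum degree $2$ (paths and cycles), orient it so that each vertex is responsible for at most one partner. It then replaces one element of $S_v$ by a vertex of $N[v]\setminus N[u]$, which exists by twin-freeness when $N[v]\not\subseteq N[u]$. If instead $N[v]\subsetneq N[u]$, the responsibility falls on $u$.

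The delicate part is to show that the replacement does not break the apex bound for the other neighbors of $v$. If it does, we would use the fourth slot: $T_v=\{v\}\cup S_v\cup\{x\}$ with $x\in N[v]\setminus N[u]$, at the cost of verifying that $|S_v|=3$ suffices for the $K_{3,3}$ argument. Tightness is not part of this statement.
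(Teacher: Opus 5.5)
Your reformulation of a clash and the choice $T_v=N[v]$ for $\deg(v)\le 3$ match the paper. The gap is the template $T_v=\{v\}\cup S_v$ with $|S_v|\le 3$ for every vertex of degree at least four: it cannot be completed, and the paper's own tightness example (the wheel in Fig.~\ref{fig:planar-tight}) already defeats it. There the hub $v$ has degree four and each rim vertex satisfies $N[v_i]=V(G)\setminus\{v_{i+2}\}\subsetneq N[v]$. Any positive $T(v_i)$ lies inside $N[v_i]\subseteq N[v]$, so the pair $\langle N[v],N[v_i]\rangle$ can only be distinguished by $v_{i+2}\in T(v)$. This forces $T(v)=N(v)$, and in particular $v\notin T(v)$.

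Your repair step only swaps an element of $S_v$ or appends one, and the ``fourth slot'' does not exist: $\{v\}\cup S_v\cup\{x\}$ has five elements when $|S_v|=3$. The missing idea is the paper's separate treatment of $\deg(v)=4$. There one sets $T(v)=N(v)$ and gives up the property that every clash involving $v$ forces adjacency. One then repairs the non-neighbours $w$ with $N(v)\subseteq N(w)$ by $T(w)=\{w,v_1,v_2,v_3\}$; by $K_{3,3}$-freeness this set lies in no other neighbourhood centred outside $N[v]$.

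For $\deg(v)\ge 5$ the apex argument also has holes. $K_{3,3}$-freeness does bound the vertices outside $S_v$ adjacent to all three chosen neighbours; in fact there is at most one besides $v$, not two. But a chosen neighbour $u\in S_v$ already satisfies $T_v\subseteq N[u]$ once it is adjacent to the other two chosen neighbours. Choosing them ``far apart in the rotation'' does not exclude this, since chords of the outerplanar link $G[N(v)]$ can join rotation-distant neighbours. If such a $u$ has degree at most three, then $T_u=N[u]\subseteq N[v]$ holds automatically. The clash then needs only $T_v\subseteq N[u]$, so it is not of the mutual-apex form your graph $H$ records. Finally, the claim that a swap creates no new clashes, which you call the delicate part, is exactly where the work lies, and it is left open.

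The paper settles all of this locally around $v$:
\begin{itemize}
\item It applies the outerplanar lemma to the link and adds $v$ to every resulting set. This separates non-twin neighbours of $v$ from one another and from every vertex outside $N[v]$.
\item Twins of the link (at most three, by $K_5$-freeness) receive an extra example outside $N[v]$.
\item Saturating neighbours $u$ with $N[v]\subsetneq N[u]$ (at most one when $\deg(v)\ge 5$, by $K_{3,3}$-freeness) also receive an extra example outside $N[v]$.
\item It sets $T(v)=\{v,v_1,v_2,v_3\}$. Here $v_1,v_2$ are nonadjacent vertices of degree two in a maximal outerplanar supergraph of the link, so they have at most one common neighbour $w$, and $v_3\notin N[w]$ is chosen unless $w$ is saturating.
\end{itemize}
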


The proof relies on an auxiliary bound of Lemma~\ref{lem:outerplanar} for outerplanar graphs, which we establish first for 2-connected graphs in Claim~\ref{cl:two-connected};
and then extend to the general case.

\begin{lemma}\label{lem:outerplanar}
For every outerplanar graph $G$, $\nnctdpar^+(\calN_G)\leq 2$.
\end{lemma}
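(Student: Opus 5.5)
We prove the 2-connected case first (Claim~\ref{cl:two-connected}) and then pass to general outerplanar graphs via the block structure. Throughout, $T(N[v])\subseteq N[v]$ and $|T(N[v])|\le 2$. The pair $N[u],N[v]$ is distinguished exactly when $T(N[u])\not\subseteq N[v]$ or $T(N[v])\not\subseteq N[u]$. So the goal is to choose, for each $v$, at most two vertices of $N[v]$ such that no two vertices $u\ne v$ have each other's teaching sets inside their own neighborhoods.

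\textbf{2-connected case.} Let $G$ be 2-connected outerplanar with Hamiltonian outer cycle $v_0v_1\cdots v_{n-1}$; the remaining edges are non-crossing chords. We will try $T(N[v_i])=\{v_{i-1},v_{i+1}\}$, the two cycle-neighbors. Suppose $N[u]$ and $N[v]$ clash for $u=v_i$. Then $v$ is adjacent to both $v_{i-1}$ and $v_{i+1}$, or equals one of them. Symmetrically, $u$ is adjacent to or equal to both cycle-neighbors of $v$.

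If $v$ is not a cycle-neighbor of $u$, then $v$ is adjacent to $v_{i-1}$ and $v_{i+1}$, and $u$ is adjacent to both cycle-neighbors of $v$. Outerplanarity (no $K_4$ minor, no $K_{2,3}$ minor) should rule this out. The path $v_{i-1}v_iv_{i+1}$ together with the chords from $v$ splits the disk, and $u$'s chords to $v$'s cycle-neighbors cross them unless the configuration is $C_4$-like. A $C_4$-like configuration forces $N[u]=N[v]$ only in degenerate small cases, which we handle separately.

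If $v=v_{i+1}$, the clash requires $v_{i+2}\in N[v_i]$ and $v_{i-1}\in N[v_{i+1}]$. These chords $v_iv_{i+2}$ and $v_{i-1}v_{i+1}$ cross unless $n\le 4$. So the only residual cases are small cycles or triangles on consecutive vertices, where twin-freeness and a direct check suffice. If the naive choice fails in these local cases, we modify it, for instance by replacing $v_{i+1}$ with the endpoint of an extremal chord at $v_i$.

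\textbf{General case.} Take the block-cut tree and root it. For a vertex $v$ we use the teaching set from its own block, or for a cut vertex from its parent block. A cut vertex $c$ may need a witness from a child block so that it is distinguished from vertices inside that block. Bridges and pendant vertices need care: a leaf $\ell$ with neighbor $p$ gets $T(N[\ell])=\{\ell\}$, which lies outside $N[x]$ for every $x\ne \ell,p$. Leaves and $p$ differ by twin-freeness, but we must still make sure $T(N[p])\not\subseteq N[\ell]$.

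Vertices in different blocks clash only if their teaching sets lie in the intersection of their neighborhoods. That intersection is contained in the shared cut vertex and its neighbors, so we argue by checking those few configurations.

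\textbf{Main obstacle.} The main obstacle is making the 2-connected choice fully rigorous in the small and degenerate configurations. The second difficulty is combining blocks at cut vertices without exceeding size two. For the latter, the first thing to try is to choose each cut vertex's teaching set inside its parent block, and then show that its child-block partners are already distinguished by their own teaching sets.
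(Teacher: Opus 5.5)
\textbf{There is a genuine gap, mainly in the general case.} For $n\ge 5$ your 2-connected construction is the paper's (cycle-neighbor sets $\{v_{i-1},v_{i+1}\}$). Beyond the 2-connected case, however, your plan to take each cut vertex's teaching set inside its parent block cannot work. Take the net: a triangle $abr$ with one pendant vertex attached to each of $a$, $b$, $r$. In any rooting of the block-cut tree the triangle has at most one parent cut vertex. So at least two of $a,b,r$, say $b$ and $r$, have the triangle as parent block. Then $T(N[b]),T(N[r])\subseteq\{a,b,r\}\subseteq N[b]\cap N[r]$, although $N[b]\neq N[r]$, so this pair clashes. The problem is a partner in the \emph{parent} block, not in a child block.

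The missing idea is that a cut vertex can identify itself with two examples. Set $T(N[v])=\{u,w\}$ for neighbors $u,w$ of $v$ lying in different components of $G-v$. These are nonadjacent and have no common neighbor besides $v$, so $N[v]$ is the only closed neighborhood containing both. Hence every pair involving $v$ is distinguished by $T(N[v])$ alone; this also settles your leaf-versus-parent worry. A non-cut vertex $x$ lies in a unique block $B$ with $N_G[x]=N_B[x]$, so the 2-connected claim handles pairs inside one block. Two non-cut vertices in different blocks have neighborhoods meeting in at most one cut vertex, and each of their teaching sets has an element outside it. Remaining vertices (leaves, isolated vertices) get $\{v\}$.

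\textbf{Your 2-connected argument also has two holes.} First, for $n=4$ the cycle-neighbor map fails on both graphs:
\begin{itemize}
\item on $C_4$, $T(N[v_0])=T(N[v_2])=\{v_1,v_3\}\subseteq N[v_0]\cap N[v_2]$;
\item on $C_4$ with chord $v_0v_2$, $T(N[v_1])=T(N[v_3])=\{v_0,v_2\}\subseteq N[v_1]\cap N[v_3]$.
\end{itemize}
In both cases the two neighborhoods are distinct. Your repair via an extremal chord is unavailable, because the offending vertices have no incident chords. You need explicit maps that use the vertex itself, as the paper does, e.g.\ $T(N[v_i])=\{v_i,v_{i+1}\}$ on $C_4$.

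Second, for $n\ge 5$ the non-consecutive case is only asserted (``should rule this out''). The paper treats all pairs at once. One of the two arcs of the outer cycle between $v_i$ and $v_j$ has at least three edges; orient it as $v_j,v_{j+1},\dots,v_{i-1},v_i$. Then these four vertices are distinct, and the forced edges $v_iv_{j+1}$ and $v_jv_{i-1}$ are chords with interleaved endpoints, giving a $K_4$ minor. In particular, no residual ``$C_4$-like'' configuration exists once $n\ge 5$.
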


\begin{claim}\label{cl:two-connected}
For every 2-connected outerplanar graph $G$,
$\nnctdpar^+(\calN_G)\leq 2$.
\end{claim}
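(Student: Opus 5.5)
We would reduce the statement to a purely local covering condition and then argue by induction along the weak dual tree of $G$. The teaching map has the form $T(N[v])=\{v,f(v)\}$, where $f(v)\in N(v)$ is a single neighbour chosen for each vertex; both examples lie in $N[v]$, so the map is positive.

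\textbf{Step 1: reduction to a covering condition.} Since $v\in T(N[v])$ and $v\in N[u]$ exactly when $u$ and $v$ are adjacent, every non-adjacent pair $u,v$ is already distinguished by $v$ (and by $u$). So we only need to handle edges $uv$. A positive example $x\in T(N[u])$ distinguishes $N[u]$ from $N[v]$ iff $x\notin N[v]$. Hence we need the following: for every edge $uv$, either $f(u)\neq v$ and $f(u)\not\sim v$, or $f(v)\neq u$ and $f(v)\not\sim u$. In that case we say that $u$ (respectively $v$) \emph{covers} $uv$. The choice $f(u)=w$ fails to cover only the edge $uw$ and the edges $uv$ with $v$ a common neighbour of $u$ and $w$. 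Since $G$ is outerplanar it has no $K_{2,3}$ minor, so adjacent $u,w$ have at most two common neighbours; these are the apexes of the inner faces on the two sides of $uw$, when those faces are triangles.

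\textbf{Step 2: base case and induction.} Let $v_1,\dots,v_n$ be the Hamiltonian boundary cycle. If $G$ is a cycle, twin-freeness forces $n\ge 4$. We set $f(v_i)=v_{i+1}$, indices taken mod $n$. Then $v_{i+1}$ covers the edge $v_iv_{i+1}$, because $v_{i+2}$ is not adjacent to $v_i$ when $n\ge 4$.

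Otherwise we take a leaf face of the weak dual tree, bounded by a chord $ab$ and an ear path $a=x_0,x_1,\dots,x_k=b$ with $k\ge 2$. Let $G'$ be obtained by deleting $x_1,\dots,x_{k-1}$; it is again $2$-connected outerplanar.

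\begin{itemize}
\item \emph{Twins.} $G'$ stays twin-free. The only neighbourhoods that change are those of $a$ and $b$, and $N[a]=N[b]$ in $G'$ would force $a$ and $b$ to be twins in $G$ when $k=2$. When $k\ge 3$ the vertices $a,b$ are not twins in $G'$ anyway, unless the ear can be chosen differently.
\item \emph{Extension.} We extend $f$ along the ear by $f(x_i)=x_{i+1}$ or $f(x_i)=x_{i-1}$, suitably oriented.
\item \emph{Long ears ($k\ge 3$).} Interior ear vertices have degree two and no triangles, so they cover all ear edges except possibly the two end edges. The end edges can also be handled from inside, for example $f(x_1)=x_2$ covers $ax_1$ since $x_2\not\sim a$.
\item \emph{Triangular ears ($k=2$).} The vertex $x=x_1$ covers nothing: with $f(x)=a$ it fails on $xa$ and on $xb$, since $b\sim a$. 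So $a$ must cover $xa$, which needs $f(a)\notin\{x,b\}$, and symmetrically $f(b)\notin\{x,a\}$. We also must not lose the coverage of the chord $ab$ itself.
\end{itemize}

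\textbf{Step 3: strengthened hypothesis (main obstacle).} The triangular-ear case is where we expect the difficulty. The inductive hypothesis has to be strengthened to something like the following: for every boundary edge $ab$ of $G'$ there is a valid $f$ with $f(a)\neq b$ and $f(b)\neq a$, with $ab$ covered by an endpoint whose choice is not the other endpoint. Equivalently, we need an $f$ in which no boundary edge is used as a choice in either direction by its endpoints. Boundary edges of $G'$ that become chords in $G$ create new common neighbours $x$. These can destroy coverings $f(a)=w$ only when $w=x$, which does not happen, or when $w$ is adjacent to $x$, which forces $w=b$.

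So the plan is to prove the strengthened statement by the same ear induction. We would first try choosing $f(v)$ as the neighbour of $v$ across the ``deepest'' chord, meaning the one pointing toward the root of the weak dual tree. We would then check that chords and boundary edges are always covered from the side pointing away from the ear. If this fixed rule fails in some configuration, we fall back on local recolouring of $f$ at $a$ and $b$, using that each has at most two problematic (triangle-apex) choices among its neighbours.
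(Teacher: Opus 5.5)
\textbf{The approach has a genuine gap, and it is fatal rather than a technicality of Step 3.} No map of the form $T(N[v])=\{v,f(v)\}$ with a single neighbour $f(v)$ works for all 2-connected outerplanar graphs. Take the graph of Fig.~\ref{fig:outerplanar-tight}: the cycle $v_1\cdots v_6$ with chords $v_1v_3$ and $v_4v_6$, which is 2-connected, outerplanar and twin-free. It has triangular ears at $v_2$ and at $v_5$. Your own triangular-ear analysis then gives $f(v_1)\notin\{v_2,v_3\}$, $f(v_3)\notin\{v_1,v_2\}$, $f(v_4)\notin\{v_5,v_6\}$ and $f(v_6)\notin\{v_4,v_5\}$. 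These four vertices all have degree three, so this forces $f(v_1)=v_6$, $f(v_6)=v_1$, $f(v_3)=v_4$ and $f(v_4)=v_3$. Then neither endpoint covers the edge $v_3v_4$ (nor $v_1v_6$): indeed $T(N[v_3])=T(N[v_4])=\{v_3,v_4\}\subseteq N[v_3]\cap N[v_4]$, so the pair clashes. Hence no strengthened inductive hypothesis and no local recolouring at $a,b$ can rescue the plan. Independently of this, Step 3 as written is only a heuristic (``we would first try \dots''), and the twin-freeness of $G'$ when $k\ge 3$ is left unresolved.

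\textbf{The missing idea is that the centre $v$ is a wasted example.} It only separates $N[v]$ from non-neighbours, and those pairs can be handled without it. The paper spends both slots on neighbours along the outer Hamiltonian cycle, setting $T(v_i)=\{v_{i-1},v_{i+1}\}$ for $n\ge 5$, with explicit maps for $n\le 4$. Suppose $N[v_i]$ and $N[v_j]$ are not distinguished. Then $v_{i\pm1}\in N[v_j]$ and $v_{j\pm1}\in N[v_i]$. Orient the cycle so that the arc $v_j,v_{j+1},\dots,v_{i-1},v_i$ has at least three edges, which is possible since $n\ge 5$. Then $v_iv_{j+1}$ and $v_jv_{i-1}$ are crossing chords of the outer cycle, giving a $K_4$ minor and contradicting outerplanarity. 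This treats adjacent and non-adjacent pairs uniformly and needs no induction over the weak dual tree.
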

\begin{proof}
We first consider graphs with at most four vertices.
For $|V(G)|\leq 3$, all closed neighborhoods coincide, so the claim follows immediately.
If $|V(G)|=4$, then $G$ is either a cycle or a cycle with a chord. In the former case, label the vertices cyclically and set $T(v_i)=\{v_i,v_{i+1}\}$.
In the latter case, label the vertices so that $v_1v_3$ is the chord, and set $
T(v_1)=T(v_3)=\{v_2,v_4\}$, $T(v_2)=\{v_2,v_3\}$, and $T(v_4)=\{v_4,v_1\}$.
It is straightforward to verify that in both cases the maps are non-clashing.

Now suppose that $|V(G)|=n\geq 5$.
Fix an outerplanar embedding and label the vertices $v_1, \ldots, v_n$ in cyclic order along the outer face, with indices taken modulo $n$.
Set $T(v_i)=\{v_{i-1},v_{i+1}\}$.
Clearly, each teaching set has size two and is contained in the corresponding closed neighborhood.

Suppose, for a contradiction, that two distinct closed neighborhoods $N[v_i]$ and $N[v_j]$ are not distinguished.
Then $\{v_{i-1},v_{i+1}\}\subseteq N[v_j]$, $\{v_{j-1},v_{j+1}\}\subseteq N[v_i]$.
Consider the two paths between~$v_i$ and~$v_j$ along the outer cycle. Since $n\geq 5$, at least one contains at least three edges. Without loss of generality, orient the cycle so that this path passes through $v_{j+1}$
and $v_{i-1}$, in that order. 
Thus, $v_i,v_j,v_{j+1},v_{i-1}$ are distinct.
The conditions above imply that $v_iv_{j+1},v_jv_{i-1}\in E(G)$.
Contracting the other $v_i$--$v_j$ path and the $v_{j+1}$--$v_{i-1}$ segment  gives us a $K_4$ minor, so contradicts outerplanarity.

Hence, $T$ is a positive non-clashing teaching map of size at most two.
\end{proof}

\begin{proof}[of Lemma~\ref{lem:outerplanar}]
We construct a positive non-clashing teaching map $T$ of size at most two.
Since teaching sets in different connected components are nonempty and positive, it suffices to consider each connected component separately.

First, for every cut-vertex $v$, we set $T(v)=\{u,w\}$, where $u$ and $w$ are neighbors of $v$ belonging to distinct components of $G-v$.
Next, for every non-cut-vertex belonging to a 2-connected component, we assign the teaching set given by Claim~\ref{cl:two-connected}.
Finally, for every remaining vertex $v$, we set $T(v)=\{v\}$.
Clearly, all teaching sets are positive and have size at most two.

We show that $T$ is non-clashing.
First, consider a cut-vertex $v$ with $T(v)=\{u,w\}$.
Since~$u$ and~$w$ belong to distinct components of $G-v$, no vertex other than $v$ is adjacent to both.
Thus,~$T(v)$ distinguishes $N[v]$ from every other closed neighborhood.

Now consider two distinct non-cut-vertices $u$ and $v$.
If they belong to the same 2-connected component, their neighborhoods are distinguished by Claim~\ref{cl:two-connected}.
Otherwise, they belong to different 2-connected components
or lie outside any such component. 
In either case, their neighborhoods intersect in at most one cut-vertex, while each teaching set contains an example outside this intersection.
Hence, at least one of their teaching sets distinguishes the pair.

Therefore, $T$ is a positive $\nctm$ of size at most two, and hence, for an outerplanar graph $G$, $\nnctdpar^+(\calN_G)\leq 2$.
\end{proof}

The bound of Lemma~\ref{lem:outerplanar} is tight.
Consider the outerplanar graph $G$ in Fig.~\ref{fig:outerplanar-tight}.

Since $N[v_1]=N[v_2]\cup\{v_6\}$, $N[v_4]=N[v_5]\cup\{v_3\}$,
any positive non-clashing teaching map of size one must satisfy
$T(v_1)=\{v_6\}$ and $T(v_4)=\{v_3\}$.
However, these sets do not distinguish a pair $\la N[v_1], N[v_4]\ra$.
Thus, $\nnctdpar^+(\calN_G)=2$.

\begin{figure}[ht]
    \centering
    \begin{minipage}{0.35\textwidth}\centering
    \scalebox{.75}{\begin{tikzpicture}[
    vertex/.style={circle, fill=black, inner sep=1.5pt},
    every edge/.style={draw, thick}]
    \node[vertex, label=left:$v_1$] (v1) at (0,1.6) {};
    \node[vertex, label=left:$v_2$] (v2) at (1.4,.8) {};
    \node[vertex, label=left:$v_3$] (v3) at (1.4,-.8) {};
    \node[vertex, label=left:$v_4$] (v4) at (0,-1.6) {};
    \node[vertex, label=left:$v_5$] (v5) at (-1.4,-.8) {};
    \node[vertex, label=left:$v_6$] (v6) at (-1.4,.8) {};

    \draw (v1)--(v2)--(v3)--(v4)--(v5)--(v6)--cycle;
    \draw (v1)--(v3) (v4)--(v6) (v1)--(v6);
\end{tikzpicture}}
    \caption{An outerplanar graph $G$ with $\nnctdpar^+(\calN_G)=2$.}
    \label{fig:outerplanar-tight}
    \end{minipage}
    \hfil
    \begin{minipage}{0.325\textwidth}\centering
    {\scalebox{.75}{\begin{tikzpicture}[
    vertex/.style={circle, fill=black, inner sep=1.5pt},
    every edge/.style={draw, thick}]
    \node[vertex, label=above left:$v_1$] (v1) at (-1.3,1.3) {};
    \node[vertex, label=above right:$v_2$] (v2) at (1.3,1.3) {};
    \node[vertex, label=below right:$v_3$] (v3) at (1.3,-1.3) {};
    \node[vertex, label=below left:$v_4$] (v4) at (-1.3,-1.3) {};
    \node[vertex, label=above:$v$] (v) at (0,0) {};

    \draw (v1)--(v2)--(v3)--(v4)--(v1);
    \draw (v)--(v1) (v)--(v2) (v)--(v3) (v)--(v4);
\end{tikzpicture}}}
    \caption{A planar graph $G$ with $\nnctdpar^+(\calN_G)=4$.}
    \label{fig:planar-tight}    \end{minipage}
\caption*{\textit{Examples} showing that the upper bounds in Lemma~\ref{lem:outerplanar} and Theorem~\ref{thm:planar} are tight.}
\end{figure}

Now we proceed with the proof of Theorem~\ref{thm:planar}.

\begin{proof}[Theorem~\ref{thm:planar}]
    We begin by constructing a preliminary teaching map, which we subsequently modify to satisfy the non-clashing condition.

\noindent
\textbf{Preliminary Teaching Map.}
For each vertex $v$ with $\deg(v)\leq 3$, set $T(v)=N[v]$.
For vertices of degree greater than three, we use the following construction. 
Since $G[N[v]\setminus\{v\}]$ is outerplanar, Lemma~\ref{lem:outerplanar} provides a positive non-clashing teaching map of size at most two for its closed neighborhoods.
We extend each teaching set by adding $v$, obtaining sets of size at most three.

For every vertex $v$ with $\deg(v)\leq 3$, the choice $T(v)=N[v]$ distinguishes its neighborhood from every other distinct neighborhood.
Moreover, when $\deg(v)>3$, the constructed teaching sets also distinguish every pair $\langle N[u],N[w]\rangle$ with $u\in N[v] \backslash \{v\}$ and $w\notin N[v]$, since $v\in T(u)$ but $v\notin N[w]$.
Thus, after constructing the preliminary teaching map $T$, it remains to distinguish pairs of neighborhoods centered within $N[v]$.
We address these pairs by modifying the preliminary teaching sets in four steps.

First, we consider vertices that are twins in $G[N[v]]$ but not in $G$. 
Next, we handle vertices $u\in N[v]\setminus\{v\}$ satisfying $N[v]\subsetneq N[u]$, which we call \emph{saturating vertices}.
We then modify the teaching sets of the  {remaining vertices in $N[v]\setminus\{v\}$} and, finally, the {teaching set of $v$}.
Throughout these modifications, we ensure that all previously distinguished pairs remain distinguished.

\smallskip\noindent
\textbf{Twin vertices in $G[N[v]\setminus \{v\}]$.}
Let $X\subseteq N[v]\setminus\{v\}$ be a set of pairwise twins in $G[N[v]\setminus\{v\}]$. 
Since such vertices form a clique, $|X|\leq 3$; 
otherwise, $X$ together with $v$ would contain a $K_5$, contradicting planarity. 
We modify the teaching sets according to the size of $X$.

\noindent
\textit{Case 1: $|X|=2$.}
Since the vertices of $X$ have distinct closed neighborhoods in $G$, their symmetric difference contains a vertex outside $N[v]$. 
We add this vertex to the teaching set of the vertex whose neighborhood has it, distinguishing the pair while increasing the teaching set size to at most four.

\noindent
\textit{Case 2: $|X|=3$.}
First, no vertex in $N[v]\setminus(X\cup\{v\})$ can be adjacent to a vertex of $X$. 
Otherwise, it would be adjacent to all three vertices of $X$, forming a $K_5$ together with $v$.

For each $u\in X$, set $T(u)=\{u,v\}$.
Since $G$ is twin-free, the closed neighborhoods of vertices in $X$ are pairwise distinct. 
For each pair $u,w\in X$, choose a vertex in $N[u]\triangle N[w]$ and add it to the teaching set of the vertex whose neighborhood contains it.
Each teaching set receives at most two additional vertices, so its size remains at most four.

These sets distinguish vertices of $X$ from vertices outside $N[v]$ by $v\in T(u)$, from vertices in $N[v]\setminus(X\cup\{v\})$ by $u\in T(u)$, and from one another by the additional vertices.

\smallskip\noindent
\textbf{Saturating vertices.}
We call a vertex $u\in N[v]\setminus\{v\}$ \emph{saturating} if $N[v]\subsetneq N[u]$. So, distinguishing~$N[u]$ from~$N[v]$ requires a vertex from $N[u]\setminus N[v]$ in $T(u)$.

If $\deg(v)>4$, there is at most one saturating neighbor.
Indeed, two such neighbors, together with $v$, would form one part of a $K_{3,3}$, with three other neighbors of $v$ forming the other, contradicting planarity.

If $u$ is the unique saturating neighbor, it has no twin in $G[N[v]\setminus\{v\}]$. 
Thus, its teaching set has size at most three, and we add a vertex from $N[u]\setminus N[v]$ to distinguish the pair.

It remains to consider $\deg(v)=4$ with two saturating neighbors $u,w$. 
These vertices are twins in $G[N[v]\setminus\{v\}]$, so the previous step has already added a vertex outside $N[v]$ to one of their teaching sets, say $T(u)$. 
Hence, $N[u]$ and $N[v]$ are already distinguished. 
We add a vertex from $N[w]\setminus N[v]$ to $T(w)$, keeping both teaching sets of size at most four.

\smallskip\noindent
\textbf{Remaining vertices of $N[v]\setminus \{v\}$.}   
We now aim to distinguish $N[v]$ from neighborhoods centered within non-saturating vertices of $N[v]\setminus \{v\}$.

\noindent
\textit{Case 1: $\deg(v)=4$.}
Let $N[v]=\{v,v_1,v_2,v_3,v_4\}$ and set $T(v)=N[v]\setminus\{v\}$.
This distinguishes~$N[v]$ from every non-saturating neighbor,
saturating neighbors are handled by their own teaching sets.

Suppose $w\in V(G)\setminus N[v]$ satisfies $N[v]\setminus\{v\}\subseteq N[w]$.
We redefine $T(w)=\{w,v_1,v_2,v_3\}$.
This set distinguishes $N[w]$ from every other neighborhood centered outside $N[v]$: a third vertex adjacent to $v_1,v_2,v_3$, together with $v$ and $w$, would yield a $K_{3,3}$, contradicting planarity.
Moreover, $w\in T(w)\setminus N[v]$ distinguishes~$N[w]$ from~$N[v]$, while
$v\in T(u)\setminus N[w]$ distinguishes $N[w]$
from~$N[u]$ for every $u\in N(v)$.

\noindent
\textit{Case 2: $\deg(v)> 4$.}
Here, we first show that the outerplanar graph
$H=G[N[v]\setminus\{v\}]$ contains two vertices
$v_1$ and $v_2$ of degree at most two with at most one
common neighbor in $H=G[N[v]\setminus \{v\}]$.

Since $H$ is outerplanar, it can be extended to a maximal outerplanar graph $H'$.
As $|V(H')|\geq 5$, we can choose two nonadjacent degree-two vertices $v_1,v_2$ of $H'$.
Since~$v_1$ and~$v_2$ have degree two, two common neighbors for them would force the outer boundary of $H'$ to be a four-cycle, contradicting $|V(H')|\geq5$.
Thus, $v_1$ and $v_2$ have at most one common neighbor in $H'$, and hence also in $H$.

So, if $v_1$ and $v_2$ have a common neighbor $w$ in $H$ such that $N[v]\nsubseteq N[w]$, choose $v_3\in N(v)\setminus N[w]$.
Otherwise, choose any $v_3\in N(v)\setminus\{v_1,v_2\}$.
Set $T(v)=\{v,v_1,v_2,v_3\}$.

The vertex $v$ distinguishes $N[v]$ from every neighborhood centered outside $N[v]$.
Moreover, any neighbor containing both $v_1$ and $v_2$ in its closed neighborhood must be their unique common neighbor $w$. 
By the choice of $v_3$, this neighbor either saturates $v$ or is distinguished from $v$ by $v_3$. 
Thus, $T(v)$ distinguishes $N[v]$ from every non-saturating neighbor, while saturating neighbors are handled by their own teaching sets.

\smallskip
After these modifications, every pair of distinct closed
neighborhoods is distinguished by at least one of their
teaching sets. Moreover, all teaching sets remain positive
and have size at most four. Thus, the resulting teaching map
is non-clashing, proving that $\nnctdpar^+(\calN_G)\leq 4$.
\end{proof}

The upper bound for $\nnctdpar^+$ on planar graphs is also tight. Consider the graph $G$ in
Fig.~\ref{fig:planar-tight}. Since $N[v]=V(G)$ and
$N[v_i]=V(G)\setminus\{v_{i+2}\}$ (indices modulo $4$),
any valid positive non-clashing teaching map must take
$v_{i+2}$ in $T(v)$ for every $i\in[4]$.
Thus, $|T(v)|\geq 4$, and the upper bound yields
$\nnctdpar^+(\calN_G)=4$.

\subsection{(General) Non-Clashing Teaching Dimension}\label{subsec:uppergen}

In this subsection, we prove a tight upper bound of two on the non-clashing teaching dimension of closed neighborhoods in planar graphs.

\begin{theorem}\label{thm:gen2}
For every planar graph $G$, $\nnctdpar(\calN_G)\leq 2$.
\end{theorem}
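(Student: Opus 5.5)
The plan is to fix a canonical first example and reduce the problem to distinguishing adjacent pairs. For every vertex $v$ we put $v$ itself into $T(v)$ as a positive example. If $u$ and $v$ are non-adjacent, then $v\in N[v]\setminus N[u]$, so the pair $\langle N[u],N[v]\rangle$ is already distinguished by $T(v)$. Each vertex then has exactly one slot left, an example $x_v$ that may be positive or negative. It remains to distinguish $N[u]$ from $N[v]$ for every edge $uv$; since $G$ is twin-free, every such pair is genuinely distinct.

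The effect of $x_v$ on the neighbors of $v$ is easy to describe. A positive $x_v\in N[v]$ distinguishes $N[v]$ from exactly those neighbors $w$ with $x_v\notin N[w]$. A negative $x_v\notin N[v]$ distinguishes $N[v]$ from exactly those neighbors $w$ with $x_v\in N(w)$, that is, from the neighbors of $v$ adjacent to $x_v$. So the task becomes a covering problem. Each edge $uv$ must be \emph{claimed} by one endpoint, giving an orientation of $E(G)$, say $v\to u$ when $v$ claims it. For each $v$ we need a single vertex $x_v$ that separates $N[v]$ from all out-neighbours of $v$ at once. I would aim for an orientation in which the out-neighbourhood of every vertex is \emph{uniformly separable}. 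By this I mean it is either disjoint from $N(y)$ for some $y\in N[v]$, or contained in $N(y)$ for some $y\notin N[v]$; the positive choice $x_v=y$ handles the first case and the negative choice $x_v=y$ the second.

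To build such an orientation, I would proceed by induction on $|V(G)|$, removing a vertex $z$ of degree at most $5$. Removing $z$ can create twins or merge pairs, but the inductive map on $G-z$ (after identifying twins) can be lifted with local repairs. The edges at $z$ are oriented into $z$ whenever the other endpoint $w$ can absorb $z$: this holds if $x_w$ is positive and not adjacent to $z$, or negative and adjacent to $z$. Otherwise the edge is oriented out of $z$. The out-neighbourhood of $z$ then lies in the outerplanar link $G[N(z)]$, and I would use the structural facts from the proof of Theorem~\ref{thm:planar} to find a separating $x_z$. Those facts are the degree-two vertices in a maximal outerplanar supergraph, and the $K_{3,3}$ and $K_5$ exclusions limiting common neighbours and saturating vertices. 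For example, a vertex $y$ outside $N[z]$ adjacent to all out-neighbours of $z$ exists when there are at most two of them. A vertex $y\in N[z]$ non-adjacent to all of them can often be found using Lemma~\ref{lem:outerplanar}-style arguments on the link.

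The main obstacle is this repair step. If $z$ has many non-absorbable neighbours whose common structure admits no single separator, we must recolour an $x_w$ for some $w\in N(z)$ and possibly drop the example $w$ from $T(w)$, trading it for a second freely chosen example. That trade reopens non-adjacent pairs near $w$. If the local repair fails, I would first try a global alternative: choose an orientation of $G$ with out-degree at most $3$ (planar graphs have arboricity at most $3$), and then show that any three out-neighbours of $v$ admit a uniform separator unless a $K_5$ or $K_{3,3}$ minor appears. I would use the twin-free hypothesis to handle the degenerate triangle configurations. The remaining cases would be patched by reversing individual edges along alternating paths.

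Tightness is immediate, since a graph whose closed neighbourhoods form a chain-like family, such as a short path, already needs two examples.
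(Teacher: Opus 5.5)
\textbf{There is a genuine gap, and it lies in your very first reduction, not only in the repair step.} Once $v\in T(v)$ for every $v$, an adjacent pair $\langle N[u],N[v]\rangle$ can only be distinguished by $x_u$ or $x_v$. This requirement is infeasible on the icosahedron, which is planar and twin-free. Fix a vertex $a$. Its link is a $5$-cycle $u_1\cdots u_5$. Each of the five vertices at distance two from $a$ is adjacent to exactly two consecutive $u_i$, and the antipode of $a$ is adjacent to none of them. Consider the possible choices of $x_a$:
\begin{itemize}
\item $x_a=a$ separates $N[a]$ from no neighbour;
\item a positive $x_a=u_j$ separates $N[a]$ only from $N[u_{j+2}]$ and $N[u_{j-2}]$;
\item a negative $x_a$ at distance two separates it from exactly two neighbours;
\item the antipode separates it from none.
\end{itemize}
By symmetry, every vertex claims at most two of its edges. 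That gives at most $24$ claims for $30$ edges, so no size-two map with $v\in T(v)$ for all $v$ is non-clashing there.

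The same graph refutes your auxiliary claims. Take two non-adjacent neighbours of $a$, e.g.\ $u_1,u_3$. They have no common neighbour outside $N[a]$, and no vertex of $N[a]$ lies outside $N[u_1]\cup N[u_3]$. So even two out-neighbours need not be uniformly separable. (Your positive condition must also read ``disjoint from $N[y]$'', not $N(y)$: $y=u_1$ satisfies the open version but does not separate $N[a]$ from $N[u_1]$.) Every orientation of the $30$ edges has a vertex of out-degree at least three. Yet no three neighbours of any vertex are separable, and there is no $K_5$ or $K_{3,3}$ minor. Since the invariant you want to maintain by induction is false for the icosahedron, no local lifting can succeed. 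Your fallback of dropping $w$ from $T(w)$ also comes with no argument for re-closing the non-adjacent pairs it reopens.

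\textbf{The missing idea is to spend the shared slot on a common \emph{centre} rather than on each vertex itself.} The paper peels off a whole closed neighbourhood $N[u]$, where $u$ has degree at most four, or degree five with link $P_5$ or $C_5$ (Lemma~\ref{lem:structural}). It puts $u$ into the teaching set of every $N[v]$ with $v\in N[u]$. This one example separates the batch from every neighbourhood centred outside $N[u]$, including all later batches. The second slot comes from a size-one non-clashing map on the at most six concepts $\{N[v]:v\in N[u]\}$, given by Lemma~\ref{lem:small} or the explicit map of Lemma~\ref{lem:deg5map}. The construction then recurses on $G-N[u]$. On the icosahedron this yields $T(N[a])=\{a\}$ and $T(N[u_i])=\{a,u_{i+2}\}$.

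As an aside, your tightness remark is also wrong, though tightness is not part of the statement. By Lemma~\ref{lem:small}, any graph on at most five vertices has $\operatorname{NCTD}\le 1$, so short paths do not need two examples.
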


The proof relies on two auxiliary lemmas.
Lemma~\ref{lem:small} shows that every concept class of size at most five admits
a non-clashing teaching map of size at most one.
And Lemma~\ref{lem:deg5map} establishes that every planar graph of minimum
degree at least five contains a vertex whose closed neighborhood
induces a family admitting a non-clashing teaching map of size at most one. 
Together, these results yield Theorem~\ref{thm:gen2}.

\begin{lemma}\label{lem:small}
    For every concept class $\mathcal C$ with $|\mathcal C|\leq 5$, we have $\nctd(\calC)\leq 1$.
\end{lemma}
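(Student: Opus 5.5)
The plan is to proceed by induction on $|\calC|$, using a ``peeling'' step whenever some example isolates a single concept. For $|\calC|\le 1$ empty teaching sets suffice. Write a teaching set of size one as $\{x\}$, labeled by membership. The basic observation is the following. Suppose some $x\in\mathcal X$ satisfies $\{C\in\calC: x\in C\}=\{C_0\}$, or $\{C\in\calC: x\notin C\}=\{C_0\}$. Set $T(C_0)=\{x\}$; this distinguishes $C_0$ from every other concept, since $x\in C_0\triangle C$ for all $C\ne C_0$. Then apply induction to $\calC\setminus\{C_0\}$. Pairs not involving $C_0$ are handled by the inductive map, so $\nctd(\calC)\le 1$. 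Hence it suffices to treat classes in which every example $x$ splits $\calC$ into two parts, each of size $0$ or at least $2$. I will call such a class \emph{balanced}.

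Next I handle the small balanced cases. For $|\calC|=2$ and $|\calC|=3$, two distinct concepts are split by some $x$, and every nontrivial split has a side of size $1$. So balanced classes do not occur, and peeling applies. For $|\calC|=4$, every nontrivial split is $2+2$. Pick $x$ separating $\{A,B\}$ from $\{C,D\}$, and let $y$ separate $A$ from $B$. Since $y$ also gives a $2+2$ split, after renaming we may assume $y$ separates $\{A,C\}$ from $\{B,D\}$. Set $T(A)=T(B)=\{y\}$ and $T(C)=T(D)=\{x\}$. Then:
\begin{itemize}
\item $A,B$ are distinguished by $y$, and $C,D$ by $x$;
\item $A,D$ and $B,C$ are distinguished by $y$;
\item $A,C$ and $B,D$ are distinguished by $x$, which lies in $T(C)$ and $T(D)$.
\end{itemize}
Labels are automatically consistent, because each teaching set is labeled by its own concept.

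The main obstacle is a balanced class with $|\calC|=5$, where every nontrivial split is $2+3$. I would pick $x$ with sides $P=\{A_1,A_2\}$ and $Q=\{B_1,B_2,B_3\}$, and choose $y$ separating $A_1$ from $A_2$. The idea is to let the concepts of $Q$ use teaching sets that distinguish them internally. If those sets happened to also be $\{x\}$, they would kill all cross pairs, but $Q$ then needs internal separation. So instead I try to give all of $P$ the example $x$ and cover $Q$ internally. The problem is that $P$ itself needs $y$.

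The concrete attempt is as follows. Give $T(A_1)=T(A_2)=\{y\}$. Note that $y$ has a $2+3$ split with $A_1,A_2$ on opposite sides. So $y$ separates one of the $A_i$, together with some $B_j$, from the other $A_i$ together with the remaining two $B$'s; this already distinguishes several cross pairs. For the three $B$'s, I would use $x$ for some and separating examples $z$ for others. I would then do a short case analysis on how the splits of $y$ and of separators $z$ within $Q$ align, in the same way as the $4$-case. I would try first to make two of the $B$'s use $x$. These two must then be distinguished by the third $B$'s teaching set, namely some $z\in B_i\triangle B_j$ whose own $2+3$ split is controlled by balancedness. I expect the check to close by exhausting the few possible split patterns. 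This finite case analysis is where the real work lies, and it would be the step to verify carefully, possibly by symmetry reductions on the $2+3$ split patterns.
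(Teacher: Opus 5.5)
Your peeling step and the reduction to ``balanced'' classes match the paper's. The remaining cases, however, have genuine gaps.

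In the $|\mathcal C|=4$ case, $x$ is constant on $\{C,D\}$, so $T(C)=T(D)=\{x\}$ never distinguishes $C$ from $D$. This is repairable: $T(A)=T(D)=\{x\}$, $T(B)=T(C)=\{y\}$ works.

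The real problem is $|\mathcal C|=5$. This case is the substance of the lemma, and you leave it as an expected case analysis. Moreover, the strategy you sketch does not work.
\begin{itemize}
\item Giving two of the $B_i$ the example $x$ is impossible. Since $x$ is constant on $Q$, and the condition for the pair $B_i,B_j$ involves only $T(B_i)\cup T(B_j)$, the third concept's teaching set is irrelevant to that pair.
\item Even the commitment $T(A_1)=T(A_2)=\{y\}$ can fail. Over examples $a,b,c,d$, let $A_2=\{a\}$, $A_1=\{a,b\}$, $B_1=\{b,c\}$, $B_3=\{c,d\}$, $B_2=\{d\}$; this class is balanced, and take $x=a$. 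Then $y=b$ is forced. The pairs $(A_1,B_1)$ and $(A_2,B_2)$ force $T(B_1)\in\{\{a\},\{c\}\}$ and $T(B_2)\in\{\{a\},\{d\}\}$. Then $(B_2,B_3)$ forces $T(B_3)=\{c\}$. Finally $(B_1,B_3)$, with $B_1\triangle B_3=\{b,d\}$, is left undistinguished.
\end{itemize}
Yet this class does admit a map of size one, so the choices must be made more cleverly than your plan specifies.

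The missing idea is the paper's graph encoding, which removes the case analysis entirely.
\begin{itemize}
\item In a balanced class, each nonconstant example $x$ acts through one of its size-two label classes. View that class as an edge $e_x$ of a graph $H$ on the concepts. Then $x$ separates $C,C'$ exactly when precisely one of them is an endpoint of $e_x$.
\item Distinctness of the concepts means $H$ has at most one isolated vertex and no $K_2$-component.
\item Root a spanning tree of the nontrivial component at a leaf $R$ with neighbor $S$. Give every non-root vertex the example of its parent edge. Give $R$ the example of an edge $SK$ with $K\neq R$. Give the isolated vertex, if present, the example of $RS$.
\item Two non-root vertices are always distinguished, since they cannot be each other's parent. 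The few remaining pairs involving $R$ or the isolated vertex are checked directly.
\end{itemize}
This treats four and five concepts uniformly. On the example above, $H$ is the path $A_2A_1B_1B_3B_2$, and the construction gives $T(A_2)=\{b\}$, $T(A_1)=\{a\}$, $T(B_1)=\{b\}$, $T(B_3)=\{c\}$, $T(B_2)=\{d\}$, which one verifies is non-clashing.
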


\begin{proof}
First, repeatedly remove any concept $C$ whose label on some example $x$
is unique among the remaining concepts, setting $T(C)=\{x\}$.
This distinguishes $C$ from all remaining concepts.
If at most one concept remains, we are done.

Otherwise, the remaining family $\mathcal R$ has four or five concepts. 
Since the reduction can no longer be applied, every nonconstant example, i.e., an example whose label is not the same for all concepts in $\mathcal R$, must be positive for at least two concepts and negative for at least two others. 
Thus, each such example partitions $\mathcal R$ into two label classes of sizes $2+2$ or $2+3$.

Construct an auxiliary graph $H$ on $\mathcal R$, where each nonconstant example $x$ induces an edge $e_x$ joining the two concepts from one of its size-two label classes. 
Thus, $x$ distinguishes $C_1,C_2\in\mathcal R$ precisely when exactly one of them is an endpoint of $e_x$.

Since the concepts are distinct, $H$ has at most one isolated vertex and no component of size two, as either would imply two concepts with identical labels on all examples. 
Hence, $H$ has one nontrivial component, possibly with an isolated vertex $I$.

Let $T$ be a spanning tree of the nontrivial component, rooted at a leaf $R$. Let $S$ be the neighbor of $R$, and choose a neighbor $K\neq R$ of $S$. For each non-root vertex $C$, assign the example corresponding to its parent edge in $T$.
Assign to $R$ the example corresponding to an edge $SK$, and, if $I$ exists, assign to $I$ the example corresponding to an edge $RS$.

We verify that the resulting map is non-clashing.
Two non-root vertices are distinguished by at least one of their parent edges, since they cannot be parents of each other (every edge has a unique example, since $H$ was constructed on size-two sets). 
The vertex $R$ is distinguished from $S$ by $SK$, and from every other non-root vertex by its parent edge.
Finally, if $I$ exists, it is distinguished from every non-root vertex by that vertex's parent edge, and from $R$ by an edge $RS$.

Thus $\mathcal R$ admits a non-clashing teaching map of size one.
Inserting the previously removed concepts preserves the non-clashing property, completing the proof.
\end{proof}

Before proceeding with Lemma~\ref{lem:deg5map}, we need to prove the following structural property of planar graphs with minimum degree~$5$.
\begin{lemma}\label{lem:structural}
    Let $G$ be a planar graph with minimum degree~$5$.
    Then there exists a vertex $v\in V(G)$ such that
    $\deg(v)=5$ and
    $G[N[v]\setminus \{v\}]\in\{P_5, C_5\}$.
\end{lemma}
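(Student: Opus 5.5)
The plan is a discharging argument on a fixed plane embedding of $G$. The degree-$5$ vertices carry the negative charge. I will show that a degree-$5$ vertex whose neighborhood does not induce $P_5$ or $C_5$ can always be made non-negative, apart from one configuration, a separating triangle, which I handle by passing to a minimal one.

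\textbf{Setup.} Assume $G$ is connected; otherwise work in one component, which is again planar with minimum degree $5$. Fix a plane embedding and give each vertex $x$ charge $\deg(x)-6$ and each face $f$ charge $2|f|-6$. By Euler's formula the total charge is $-12$. Only degree-$5$ vertices start negative, each with charge $-1$.

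Let $v$ be a vertex of degree $5$ with neighbors $v_1,\dots,v_5$ in rotation order. Let $H_v=G[N[v]\setminus\{v\}]$. There are two ways $H_v$ can fail to be $P_5$ or $C_5$:
\begin{enumerate}
\item[(a)] at least two of the five faces at $v$ are not triangles, so at least two consecutive pairs $v_iv_{i+1}$ need not be edges of the facial cycle;
\item[(b)] $H_v$ contains a \emph{chord}, meaning an edge $v_iv_j$ that is not a consecutive pair bounding a triangular face at $v$.
\end{enumerate}
If every face at $v$ is a triangle and there is no chord, then $H_v=C_5$. If exactly one face at $v$ is not a triangle and there is no chord, then $H_v=P_5$. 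So it suffices to show that some degree-$5$ vertex avoids both (a) and (b).

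\textbf{Discharging for (a).} Each face of length $k\ge 4$ sends $1/2$ to each incident vertex. This is affordable because $2k-6\ge k/2$ exactly when $k\ge 4$, so faces stay non-negative. A degree-$5$ vertex of type (a) then receives at least $1$ and ends non-negative. Vertices of degree at least $6$ and all faces stay non-negative. The total is still $-12<0$, so some degree-$5$ vertex has at most one non-triangular face.

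\textbf{Handling chords (b).} This is the main obstacle: the counting above does not see chords. A chord $v_iv_j$ at such a vertex $v$, with the faces around $v$ mostly triangular, yields a triangle $vv_iv_j$. This triangle separates some neighbor $v_k$ from the rest; since $\deg(v_k)\ge 5$, the triangle is separating and has vertices strictly inside.

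My plan is to choose a separating triangle $\Delta=xyz$ whose interior contains the fewest vertices, and to repeat the discharging on the disc $D$ bounded by $\Delta$. Interior vertices of $D$ have degree at least $5$, and by minimality no separating triangle lies strictly inside $D$. Hence an interior degree-$5$ vertex whose neighbors are all interior has no chord. For the count, use Euler's formula on $D$, treating the outer face $\Delta$ specially: give $x,y,z$ the adjusted charges $\deg_D-4$, or equivalently reserve enough charge to cover them. The total interior charge is then still negative, and the previous argument produces a degree-$5$ interior vertex $v$ with at most one non-triangular face.

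Two points need checking:
\begin{itemize}
\item Such a $v$ has no chord inside $D$, since any chord would create a separating triangle strictly inside $D$ or would equal $\Delta$. The boundary case, where $v$ is adjacent to two or three of $x,y,z$, needs a short separate argument.
\item The boundary charge bookkeeping must still leave an interior vertex negative.
\end{itemize}
If $G$ has no separating triangle at all, the global argument already finishes the proof. I expect the boundary accounting to be the delicate part. My first attempt would be to send all the charge of the three boundary vertices to the outer face and check that the interior sum remains at most $-1$.
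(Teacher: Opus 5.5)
Your plan is essentially the paper's proof: the paper also takes a triangle with the fewest interior vertices (it makes a triangular face the outer face so that such a triangle always exists, which replaces your case split and your preliminary global pass), runs the same count with each face of length $k\ge 4$ paying $k/2\le 2k-6$, and excludes chords by minimality. Both points you flag close immediately: Euler's formula on the disc gives that the interior vertex charges plus the bounded-face charges sum to $6-\sum_{x\in\Delta}\deg_D(x)\le -1$, since at least one edge joins $\Delta$ to its interior (this is exactly the paper's $D>Q$). No separate boundary case is needed either, because an edge between non-consecutive neighbours $u,w$ of an interior vertex $v$ gives a triangle $vuw\neq\Delta$ in the closed disc whose interior contains a neighbour of $v$ but not $v$ itself, contradicting minimality.
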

\begin{proof}
    We may assume that $G$ is connected. 
    Every planar graph $G$ with minimum degree at least $5$ contains a triangular face; otherwise, Euler's formula implies $|E(G)| \leq 2|V(G)| - 4$, contradicting the Handshaking Lemma, which gives $|E(G)| \geq \frac{5}{2}|V(G)|$.
    Fix a plane embedding of $G$ in which a triangular face $T$ is the outer face. Among all triangles whose bounded interiors contain at least one vertex, choose a triangle $T$ minimizing the number of vertices in its interior.
    By the minimality of $T$, there is no non-empty triangle contained in the face bounded by $T$.
    
    Let $H$ be the subgraph consisting of $T$, together with all vertices and edges contained in the face bounded by $T$. 
    We denote $I=V(H)\setminus V(T)$ (vertices enclosed by $T$), and let $F$ denote the set of bounded faces of $H$.
    Furthermore, for every $u\in I$, $\deg_H(u)=\deg_G(u)\geq 5.$
    For convenience, we also denote $|I|=n$, $|E(H)|=m$, $|F|=f$ and for a face $f' \in F$ let $\ell(f')$ be the length of the boundary walk.
    First, we make two standard observations for $H$.
By Euler's formula, we have $(n+3)-m+(f+1)=2$,
so $f=m-n-2$.
Also, $\sum\limits_{f\in F}\ell(f)=2m-3$,
since each edge is counted twice, except for the three edges of the outer face.

Define the total degree deficit of interior vertices as $D=\sum\limits_{u\in I}(6-\deg(u))$.
Since every interior vertex has degree at least~$5$,
only degree-$5$ vertices contribute positively to $D$. 
We also define the faces sum as $Q=\sum\limits_{f\in F}(2\ell(f)-6)$, where only triangular faces contribute zero.

By the Handshaking Lemma,
\begin{equation}\label{eqD}
    D=\sum\limits_{u\in I}(6-\deg(u))=6n-\sum\limits_{u\in I}\deg(u)=6n-2m+\sum\limits_{v\in T}\deg_H(v).
\end{equation}

Based on the observations above, 
\begin{equation}\label{eqQ}
    Q=\sum\limits_{f\in F}(2\ell(f)-6)=2\sum\limits_{f\in F}\ell(f)-6|F|=2(2m-3)-6(m-n-2)=6n-2m+6.
\end{equation}

Therefore, from Equations~\ref{eqD} and~\ref{eqQ}, $D=Q-6+\sum\limits_{v\in T}\deg_H(v)$.
Since the sum $\sum\limits_{v\in T}\deg(v)$ counts edges of $T$ twice and connectivity guarantees at least one more edge
between $T$ and $I$, we have
$\sum\limits_{v\in V(T)}\deg_H(v)>6$.
Hence, $D>Q$.

Assume that there is no degree-$5$ vertex in $I$ incident with at most one non-triangular face. 
In other words, for every vertex $u\in I$ of degree~$5$, the number of its incident non-triangular faces~$t(u)\geq 2$.

First, let us upper-bound the sum of these terms:
\begin{equation}\label{eq1}
    \sum\limits_{u\in I}t(u)\leq \sum\limits_{\substack{f\in F\\\ell(f)\geq 4}}\ell(f). 
\end{equation}
Indeed, on the left-hand side, each face $f$ is counted only if $\ell(f)>3$, and at most $\ell(f)$ times.

Next, under our assumption, the inequality
$6-\deg(u)\leq \frac{t(u)}{2}$ holds for every $u\in I$.
Therefore, $D=\sum\limits_{u\in I}(6-\deg(u))\leq \frac{1}{2}\sum\limits_{u\in I}t(u)$, and by the upper-bound we established above, we further obtain
$\frac{1}{2}\sum\limits_{u\in I}t(u)\stackrel{\textcolor{red}{\ref{eq1}}}{\leq} \frac{1}{2}\sum\limits_{\substack{f\in F\\\ell(f)\geq 4}}\ell(f)$.

Lastly, let us show that $\frac{1}{2}\sum\limits_{\substack{f\in F\\\ell(f)\geq 4}}\ell(f)\leq \sum\limits_{f\in F}(2\ell(f)-6) = Q$.
For every triangular face $f$, we have
$2\ell(f)-6=0$, while for every face $f$
with $\ell(f)\geq4$, we have
$\frac12\ell(f)\leq 2\ell(f)-6$.

Thus, under our assumption, $D\leq Q$, contradicting $D>Q$.
Therefore, there  always exists a vertex $v\in V(G)$ of degree~$5$ incident with at least four triangular faces.

\medskip
Consider the graph $H'=G[N[v]\setminus \{v\}]$.
    Since at least four faces are triangular, there is a path through all five vertices of $H'$,     which can be ordered along this path.
   Assume that there is an edge between two vertices $u$, $w$ that are non-consecutive in the cyclic order     around $v$. 
   Then the edges between $u,w$, and $v$ would form     a separating triangle, yielding a contradiction with the minimality of chosen triangular outer face $T$.
   Thus, the only possible additional edge joins     the endpoints of the path.
    Therefore, $H'$ is either a path or a cycle     on five vertices.
\end{proof}

\begin{lemma}\label{lem:deg5map}
    For every planar graph $G$ with minimum degree $5$,
there exists a vertex $v\in V(G)$ such that the family
$\calN_v=\{N[u]\mid u\in N[v]\}$ admits a non-clashing
teaching map of size at most $1$.
\end{lemma}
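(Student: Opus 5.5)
We take $v$ to be the vertex given by Lemma~\ref{lem:structural}. Then $\deg(v)=5$, and its neighbors $v_1,\dots,v_5$, listed in cyclic order around $v$, induce either the path $v_1v_2v_3v_4v_5$ or the cycle $C_5$. The family $\calN_v$ has six concepts, $N[v],N[v_1],\dots,N[v_5]$, so Lemma~\ref{lem:small} does not apply directly. Instead, we exhibit an explicit non-clashing map of size one, with every teaching set drawn from $N[v]$ or from vertices just outside it. Inside $N[v]$ the traces are fixed: $N[v]\cap N[v_i]=\{v,v_{i-1},v_i,v_{i+1}\}$ for the cycle (with the obvious truncation at the path ends), and $N[v]\cap N[v]=N[v]$.

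\textbf{Cycle case.} Here no inside vertex separates $N[v]$ from the rest, but each $v_i$ has degree at least $5$ and only four neighbors in $N[v]$. So $v_i$ has a neighbor $x_i\notin N[v]$ ($x_i\ne v_i$; $x_i$ may equal $x_j$). We set $T(v)=\{v_1\}$ and use the label of $v_1$ across the family. The concepts containing $v_1$ are $N[v],N[v_5],N[v_1],N[v_2]$; those not containing it are $N[v_3],N[v_4]$. We then need $T(v_i)$ to handle the remaining pairs.

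\textbf{Pairs among the $v_i$.} Try $T(v_i)=\{v_{i+2}\}$, with label negative for $N[v_i]$. The concept $N[v_j]$ is positive on $v_{i+2}$ exactly when $j\in\{i+1,i+2,i+3\}$. For the pair $N[v_i],N[v_j]$ with $j=i\pm1$, say $j=i+1$, the example $v_{i+2}$ lies in $N[v_{i+1}]\setminus N[v_i]$, so the pair is distinguished. For $j=i+2$, the example $v_{i+2}$ again distinguishes. By symmetry, every pair at cyclic distance $1$ or $2$ is covered, and these are all pairs in $C_5$.

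\textbf{Pairs with $N[v]$.} Since $v_{i+2}\in N[v]$ but $v_{i+2}\notin N[v_i]$, every pair $N[v],N[v_i]$ is distinguished by $T(v_i)$. Then $T(v)$ can be anything, for instance $\{v\}$. So in the cycle case the map $T(v)=\{v\}$, $T(v_i)=\{v_{i+2}\}$ works, with indices modulo $5$.

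\textbf{Path case.} The traces are now $N[v_1]\cap N[v]=\{v,v_1,v_2\}$, $N[v_5]\cap N[v]=\{v,v_4,v_5\}$, and $\{v,v_{i-1},v_i,v_{i+1}\}$ for $i=2,3,4$. We try $T(v_1)=\{v_3\}$, $T(v_5)=\{v_3\}$, $T(v_2)=\{v_4\}$, $T(v_4)=\{v_2\}$, $T(v_3)=\{v_5\}$ or $\{v_1\}$, and $T(v)=\{v\}$, and then check all pairs.
\begin{itemize}
\item[(a)] Pairs with $N[v]$: each $T(v_i)$ is an inside vertex not in $N[v_i]$, so all are distinguished.
\item[(b)] Among the $v_i$, the only delicate pair is $N[v_1],N[v_5]$. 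Both are negative on $v_3$, so the inside labels fail to separate them.
\end{itemize}
For that pair we use outside vertices. $N[v_1]$ and $N[v_5]$ differ inside $N[v]$ on $v_1,v_2,v_4,v_5$, so we set $T(v_1)=\{v_2\}$, which is positive for $N[v_1]$ and negative for $N[v_5]$. We then recheck the pairs involving $v_1$, since $N[v_1]$ versus $N[v]$ is no longer covered by $T(v_1)$. We repair this by choosing $T(v)=\{v_4\}$ or similar. After that, the finite check over the fifteen pairs is routine.

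The main obstacle is exactly this path-case bookkeeping. Only six concepts are involved, but the inside traces of $N[v_1]$ and $N[v_5]$ and of $N[v]$ are nested, so the reduction in Lemma~\ref{lem:small} does not apply cleanly. If the hand choice fails, the fallback is to use an outside neighbor $x_1$ of $v_1$ not adjacent to $v$. Such an $x_1$ exists because $\deg(v_1)\ge5$ while $v_1$ has only two neighbors inside, and it separates $N[v_1]$ from $N[v]$. This frees the inside examples for the remaining pairs.
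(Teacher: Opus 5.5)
Your cycle case is correct and matches the paper's argument; the paper takes $T(N[v])=\emptyset$ instead of $\{v\}$, which makes no difference. The path case, however, has a genuine gap: the map you end up with is not non-clashing.

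After your repair the map is $T(v)=\{v_4\}$, $T(v_1)=\{v_2\}$, $T(v_2)=\{v_4\}$, $T(v_3)\in\{\{v_5\},\{v_1\}\}$, $T(v_4)=\{v_2\}$, $T(v_5)=\{v_3\}$. The traces of $N[v_1]$ and $N[v_2]$ on $N[v]$ are $\{v,v_1,v_2\}$ and $\{v,v_1,v_2,v_3\}$, so inside $N[v]$ only $v_3$ separates them. In your first map this pair was covered by $T(v_1)=\{v_3\}$. Once you replace that by $\{v_2\}$, the pair is lost: $v_2$ lies in both neighborhoods and $v_4=T(v_2)$ lies in neither. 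You only rechecked $N[v_1]$ against $N[v]$, and no choice of $T(v)$ can fix a pair that does not involve $v$. If you take $T(v_3)=\{v_5\}$, the pair $N[v_1],N[v_3]$ fails for the same reason, since $v_2$ lies in both and $v_5$ in neither.

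So the ``routine'' fifteen-pair check fails for the map you wrote down. The fallback via an outside neighbor $x_1$ of $v_1$ does not close the gap. You never specify the resulting map, and nothing in your argument prevents $x_1$ from also being adjacent to $v_2$ or $v_5$. In that case it would not separate the pairs that actually cause trouble.

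Your diagnosis that outside vertices are needed is also misplaced. An assignment using only vertices of $N[v]$ exists. Restricting to such examples is exactly what makes verification a finite check on the fixed traces, independent of the rest of $G$.

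The paper keeps $T(v_1)=\{v_3\}$ and instead separates $N[v_1]$ from $N[v_5]$ through $T(v_5)=\{v_2\}$. Its full map is $T(v)=\{v_1\}$, $T(v_1)=\{v_3\}$, $T(v_2)=\{v_4\}$, $T(v_3)=\{v_5\}$, $T(v_4)=\{v_3\}$, $T(v_5)=\{v_2\}$. Here $T(v)$ covers the single pair $N[v],N[v_4]$ that $T(v_4)$ leaves open.

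Your own route can also be rescued, for example with $T(v)=\{v_4\}$, $T(v_1)=\{v_2\}$, $T(v_2)=\{v_3\}$, $T(v_3)=\{v_1\}$, $T(v_4)=\{v_2\}$, $T(v_5)=\{v_3\}$. That requires changing $T(v_2)$ and $T(v_3)$ as well, and it only becomes a proof once the full pairwise check is actually carried out.
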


\begin{proof}
    Let $v\in V(G)$ be a vertex of degree five such that $G[N(v)]\in\{ P_5, C_5\}$ (exists by Lemma~\ref{lem:structural}).

First, suppose $G[N(v)]\cong C_5$, with vertices indexed cyclically modulo five, i.e. $\{v_1,\ldots, v_5\}$. Define $T(N[v])=\emptyset$, $T(N[v_i])=\{v_{i+2}\}$, for each $v_i\in N[v]\setminus \{v\}$.
Every pair involving $N[v]$ is distinguished, since
$v_{i+2}\in N[v]\setminus N[v_i]$.
For two consecutive vertices, $\la N[v_i], N[v_{i+1}]\ra$ is distinguished by
$v_{i+2}$.
For vertices at cyclic distance two, one selects the other, which also distinguishes the pair.

Now suppose $G[N(v)]\cong P_5$, with vertices
$v_1,\ldots,v_5$ in path order. Define
\begin{align*}
T(N[v])   &= \{v_1\}, &
T(N[v_1]) &= \{v_3\}, &
T(N[v_2]) &= \{v_4\},\\
T(N[v_3]) &= \{v_5\}, &
T(N[v_4]) &= \{v_3\}, &
T(N[v_5]) &= \{v_2\}.
\end{align*}
Every pair involving $N[v]$ is distinguished by the teaching
set of the neighbor, except for $N[v_4]$, which is
distinguished by $v_1\in T(N[v])$.

Among the neighbors, $T(N[v_1])$ distinguishes $N[v_1]$
from $N[v_2]$, $N[v_3]$, $N[v_4]$;
$T(N[v_2])$ distinguishes $N[v_2]$ from
$N[v_3]$, $N[v_4]$, $N[v_5]$;
and $T(N[v_3])$ distinguishes $N[v_3]$ from
$N[v_4]$, $N[v_5]$.
The two remaining pairs are distinguished by
$T(N[v_5])$ and $T(N[v_4])$, respectively.

In both cases, all teaching sets belong to $N[v]$,
so the assignment is valid in $G$.
Hence the constructed maps are non-clashing and have
size at most one.
\end{proof}

Now, we are turning to the proof of Theorem~\ref{thm:gen2}.

\begin{proof}[Theorem~\ref{thm:gen2}]
    First, suppose that $G$ contains a vertex $u$ with $\deg(u)\leq 4$.
By Lemma~\ref{lem:small}, the family $\calN_u=\{N[v]\mid v\in N[u]\}$ admits a non-clashing teaching map $S$ of size at most one.
For every $v\in N[u]$, set $T(N[v])=S(N[v])\cup\{u\}$.
The example $u$ distinguishes every pair $\langle N[v],N[w]\rangle$ with $v\in N[u]$ and $w\notin N[u]$, while $S$ distinguishes all pairs of distinct neighborhoods centered in~$N[u]$.

Otherwise, $G$ has minimum degree at least five.
By Lemma~\ref{lem:deg5map}, there exists a vertex $u$ of degree five such that $G[N(u)]\cong P_5$ or $C_5$, and the family $\calN_u=\{N[v]\mid v\in N[u]\}$ admits a non-clashing teaching map $S$ of size at most one.
For every $v\in N[u]$, set $T(N[v])=S(N[v])\cup\{u\}$.
As in the previous case, the example $u$ distinguishes every pair $\langle N[v],N[w]\rangle$ with $v\in N[u]$ and $w\notin N[u]$, since $u\in N[v]\setminus N[w]$.
Meanwhile, $S$ distinguishes all pairs of distinct neighborhoods centered in~$N[u]$, including those involving $N[u]$ itself.
Thus, all these pairs are distinguished using teaching sets of size at most two.

In either case, remove $N[u]$ from $G$ and repeat the construction on the remaining induced graph.
Since each iteration distinguishes all pairs within the removed set and between the removed set and the remaining vertices, the resulting teaching map is non-clashing and has size at most two.
\end{proof}

The bound of Theorem~\ref{thm:gen2} is tight, as shown by the example in Subsection~\ref{subsec:upperpos}
(see Figure~\ref{fig:outerplanar-tight}).

\section{Hardness of Teaching Neighborhoods}\label{sec:hardness}
This section establishes that deciding whether the non-clashing teaching dimension is at most one is \NP-complete for closed neighborhoods of planar graphs, in both the general and positive settings.

\begin{theorem}\label{thm:hardplanar}
    Given a planar graph $G$ and a set of closed neighborhoods $\calN$, deciding whether $\nnctdpar(\calN)\leq 1$ is \NP-complete.
\end{theorem}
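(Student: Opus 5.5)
Membership in \NP{} is immediate: a certificate is a map assigning to each $N[u]\in\calN$ a single labeled vertex (or the empty set). Pairwise non-clashing can be checked in time $O(|\calN|^2)$. The substance is \NP-hardness, which I would obtain by a polynomial reduction from a planar variant of satisfiability. My first choice is \textsc{Planar Monotone 3-SAT} or \textsc{Planar 1-in-3 SAT}. These come with a planar embedding of the variable--clause incidence graph, and they are the standard source for planar hardness results. An alternative is a reduction from the hardness proof of \cite{bhore2026nonclashing} for general graphs, planarized with crossover gadgets. I expect the direct SAT route to be cleaner, because non-clashing size one has a very local, ``choice'' flavor.

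The construction would use the fact that, with budget one, every concept picks exactly one example. The reduction must also be free to choose the family $\calN$, which includes only selected closed neighborhoods. Two gadgets are needed.

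\textbf{Variable gadget.} This is a cycle-like planar structure whose selected neighborhoods admit exactly two size-one non-clashing assignments up to irrelevant choices, encoding true and false. I would exploit forcing patterns like those in Figure~\ref{fig:outerplanar-tight}. If $N[a]=N[b]\cup\{x\}$, then any map of size one must essentially set $T(N[a])=\{x\}$ or $T(N[b])=\{\neg x\}$, that is, $x$ labeled negatively for $N[b]$. Chaining such nested pairs around a ring propagates a binary choice consistently.

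\textbf{Clause gadget.} This consists of a small set of neighborhoods whose pairwise clashes can be resolved only if at least one incident literal gadget is in the state that ``frees'' a vertex shared with the clause. The freed vertex can then be taken into a clause neighborhood's teaching set. Since Lemma~\ref{lem:small} shows that any at most five concepts are teachable with one example, each clause gadget must involve concepts that interact with the variable gadgets, not merely internally. The gadget has to be designed so that the internal pairs are forced to use exactly the examples supplied by the literals.

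The proof then proceeds in three steps. First, prove a forcing lemma for the nested-pair pattern inside a neighborhood family. Second, verify completeness: a satisfying assignment yields a size-one map, and pairs from distant gadgets are distinguished automatically because their neighborhoods are disjoint or share at most one vertex, which the chosen examples avoid. Third, verify soundness: a size-one map induces consistent variable states and satisfies every clause. Planarity follows by placing gadgets according to the planar incidence embedding, with connecting paths drawn along its edges.

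The main obstacle I expect is soundness, specifically the global check that no ``cheating'' assignment exists. A single example in $T(N[u])$ may distinguish $N[u]$ from many neighborhoods at once, or a negative example may be used unexpectedly. Ruling this out requires a careful case analysis of which vertices lie in symmetric differences of nearby neighborhoods. I would address this by giving every non-forced vertex a private pendant structure, so that symmetric differences between gadgets are controlled and only the intended examples can resolve the critical pairs.
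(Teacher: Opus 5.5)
\textbf{There is a genuine gap: the hardness part is an outline, not a reduction.} Your membership-in-NP argument is fine. For hardness, however, no gadget is actually specified, and the step you defer as ``the main obstacle'' is exactly where the proof lives. Concretely, nothing in your plan controls negative examples, and without such control every forcing step you propose fails.

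\emph{Nested pairs.} Your lemma only gives $x\in T(N[a])$ \emph{or} $x\in T(N[b])$. The second alternative is a negative example that puts no constraint on $N[a]$. In addition, a chain of strict containments cannot close up into a ring, so the propagation mechanism is not yet defined.

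\emph{Private pendants.} Your proposed fix fails for the same reason. If $u'$ is a pendant of $u$, then $N[u]\triangle N[u']=N(u)\setminus\{u'\}$. So the pair can be separated by $T(N[u'])=\{y\}$ with $y\in N(u)\setminus\{u'\}$ labeled negatively. Then $T(N[u])=\{u'\}$ separates $N[u]$ from every other concept, since only $N[u]$ and $N[u']$ contain $u'$. A pendant by itself therefore forces nothing, and a clause concept equipped this way can sidestep the literal gadgets entirely.

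\emph{Far-apart pairs.} Your completeness remark that such pairs are distinguished ``automatically'' holds only when the examples are positive. Two disjoint neighborhoods taught by negative examples lying outside both of them clash.

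\textbf{What the paper does.} The missing ingredient is a positivity-forcing gadget: a disjoint $4$-cycle $v_0v_1v_2v_3$ with all four neighborhoods in $\mathcal N$. A short case check shows that some $v_i$ must be taught by $v_{i+2}$, a vertex lying in no neighborhood outside the cycle. Hence concepts outside the cycle must (essentially) be taught by vertices of their own neighborhoods. Only then do pendants force $T(N[u])\subseteq N(u)\setminus\{u'\}$.

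The encoding is then direct. The paper reduces from a Planar SAT variant in which every variable occurs once positively and twice negatively. It takes the incidence graph, with variable vertex $x_i$ adjacent to literal vertices $t_i,f_i$, whose neighborhoods are left out of $\mathcal N$. It swaps the literal labels, so that a literal vertex is adjacent exactly to the clauses it does not satisfy. It then adds a bonus vertex $t_i^j$ (with its own pendant) that leaves $t_i$ or $f_i$ as the only admissible example for $N[x_i]$; this is the binary choice.

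Soundness is then one line. The example of $N[c_j]$ lies in $N[x_i]$ for some variable of the clause, so $T(N[x_i])$ must separate the pair. That means its literal vertex is non-adjacent to $c_j$, i.e., it satisfies the clause.

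The occurrence restriction matters for completeness. Two clauses both satisfied by setting $x_i$ to false would both want the shared vertex $t_i$, which cannot separate them; the bonus vertex gives one of them an alternative example. With unbounded-occurrence sources such as Monotone or 1-in-3 SAT, you would have to resolve this clause--clause conflict separately.
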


\begin{figure}[ht]
    \centering
    \begin{minipage}{0.35\textwidth}
    {\scalebox{.9}{\begin{tikzpicture}[
    vertex/.style={circle, fill=black, inner sep=1.5pt},
    every edge/.style={draw, thick}
]

%  \draw[color=gray!30, fill=gray!30] (0.2,1.3) ellipse (2.3cm and 1.6cm);
%\draw[color=gray!60, fill=gray!60] (0,-.05) ellipse (1.5 cm and .5cm);

% Vertices
\node[vertex, label=left:$t_1$] (t1) at (3,10.25) {};
\node[vertex, label=left:$x_1$] (x1) at (3,9.5) {};
\node[vertex, label=left:$f_1$] (f1) at (3,8.75) {};

\node[vertex, label=left:$t_2$] (t2) at (3,8) {};
\node[vertex, label=left:$x_2$] (x2) at (3,7.25) {};
\node[vertex, label=left:$f_2$] (f2) at (3,6.5) {};

\node[vertex, label=left:$t_3$] (t3) at (3,5.5) {};
\node[vertex, label=left:$x_3$] (x3) at (3,4.75) {};
\node[vertex, label=left:$f_3$] (f3) at (3,4) {};

\node[vertex, label=left:$t_4$] (t4) at (3,3.25) {};
\node[vertex, label=left:$x_4$] (x4) at (3,2.5) {};
\node[vertex, label=left:$f_4$] (f4) at (3,1.75) {};

\node[vertex, label=below:$c_2$] (c2) at (0.5,5) {};
\node[vertex, label=above:$c_1$] (c1) at (5.5,8) {};

% Internal edges
\foreach \v in {1,2,3,4} {
    \draw (t\v) -- (x\v) -- (f\v);
}
\draw (c1) -- (t1)
(c1) -- (f2)
(c1) -- (t3);
\draw (c2) -- (f3)
(c2) -- (f2)
(c2) -- (t4);

% Outer edges

\end{tikzpicture}}}
    \caption{Associated graph $G_{\varphi}$ for a \planarsat instance $\varphi=(x_1\vee \overline{x_2}\vee x_3)\wedge (\overline{x_2}\vee \overline{x_3}\vee x_4)$.}\label{fig:add-on}
    \end{minipage}
    \hspace{2.3cm}
    \begin{minipage}{0.425\textwidth}
    {\scalebox{.9}{\begin{tikzpicture}[
    vertex/.style={circle, fill=black, inner sep=1.5pt},
    every edge/.style={draw, thick},
    vertw/.style={circle, draw=black, fill=white, inner sep=1.5pt}
]

%\draw[color=gray!30, fill=gray!30] (0.2,1.3) ellipse (2.3cm and 1.6cm);
%\draw[color=gray!60, fill=gray!30] (0,5) ellipse (.6 cm and .32cm);
%\node[label=left:{\scriptsize$F(c_2)$}] (fc2) at (.5,5) {};
%\draw[color=gray!60, fill=gray!30] (6.385,8.15) ellipse (1 cm and 1.3cm);
%\node[label=right:{\scriptsize$F(c_1)$}] (fc1) at (6.25,8.7) {};

% Vertices
%\draw[color=gray!60, fill=gray!30] (3.5,9.5) ellipse (.6 cm and .32cm);
%\node[label=right:{\scriptsize$F(x_1)$}] (fx1) at (3,9.5) {};
\node[vertw, label=left:$f_1$] (t1) at (3,10.25) {};
\node[vertex, label=left:$x_1$] (x1) at (3,9.5) {};
\node[vertex, label=below:{\scriptsize$x'_1$}] (x'1) at (3.5,9.25) {};
\node[vertw, label=left:$t_1$] (f1) at (3,8.75) {};

%\draw[color=gray!60, fill=gray!30] (3.5,7.25) ellipse (.6 cm and .32cm);
%\node[label=right:{\scriptsize$F(x_2)$}] (fx2) at (3,7.25) {};
\node[vertw, label=right:$f_2$] (t2) at (3,8) {};
\node[vertex, label=right:$x_2$] (x2) at (3,7.25) {};
\node[vertex, label=above:{\scriptsize$x'_2$}] (x'2) at (2.5,7.5) {};
\node[vertw, label=right:$t_2$] (f2) at (3,6.25) {};
\node[vertw, label=right:$t_2'$] (t2') at (2.2,6.35) {};
%\draw[color=gray!60, fill=gray!30] (1.45,6.02) ellipse (.35 cm and .47cm);
\node[vertex, label=left:$t_2^2$] (t22) at (1.3,6.3) {};
\node[label=above right:{\scriptsize${t'}_2^2$}] (ftd22) at (1.25,5.5) {};
\node[vertex] (ft'22) at (1.3,5.65) {};

%\draw[color=gray!60, fill=gray!30] (3.5,4.75) ellipse (.6 cm and .32cm);
%\node[label=right:{\scriptsize$F(x_3)$}] (fx4) at (3,4.75) {};
\node[vertex, label=below:{\scriptsize$x'_3$}] (x'3) at (3.5,4.5) {};
\node[vertw, label=left:$f_3$] (t3) at (3,5.5) {};
\node[vertex, label=left:$x_3$] (x3) at (3,4.75) {};
\node[vertw, label=left:$t_3$] (f3) at (3,4) {};

%\draw[color=gray!60, fill=gray!30] (3.5,2.5) ellipse (.6 cm and .32cm);
\node[vertex, label=below:{\scriptsize$x'_4$}] (x'4) at (3.5,2.25) {};
\node[vertw, label=left:$f_4$] (t4) at (3,3.25) {};
\node[vertex, label=left:$x_4$] (x4) at (3,2.5) {};
\node[vertw, label=left:$t_4$] (f4) at (3,1.75) {};

\node[vertex, label=left:$c_2$] (c2) at (0.5,5) {};
\node[vertex, label=below:$c'_2$] (c'2) at (0.75,4.25) {};
\node[vertex, label=right:$c_1$] (c1) at (5.5,8) {};
\node[vertex, label=below:$c'_1$] (c'1) at (5.25,7.25) {};

\node[vertex, label=above:{$v_0$}] (a) at (4.5,4) {};
\node[vertex, label=below:{$v_1$}] (e) at (5.9,4) {};
\node[vertex, label=right:{$v_2$}] (c11) at (5.2,4.8) {};
\node[vertex, label=left:{$v_3$}] (c14) at (5.2,3.2) {};

% Internal edges
\foreach \v in {1,2,3,4} {
    \draw (t\v) -- (x\v) -- (f\v);
}
\draw (c'1) -- (c1) -- (t1)
(c1) -- (f2)
(c1) -- (t3);
\draw (x1) -- (x'1)
(x2) -- (x'2)
(x3) -- (x'3)
(x4) -- (x'4)
;
\draw (a) -- (c11) -- (e)
(a) -- (c14) -- (e)
(t22) -- (ft'22);
\draw (c'2) -- (c2) -- (f3)
(c2) -- (f2)
(c2) -- (t4);
\draw (c2) -- (t22) -- (x2) (t22) -- (t2') -- (x2);
\draw (c2) .. controls (.5,7.5) and (1.6,6.5) .. (x2);

% Outer edges

\end{tikzpicture}}}
    \caption{A graph $G$ constructed from the associated graph $G_{\varphi}$. Black vertices represent concepts included in $\calN$.} \label{fig:reduction}    \end{minipage}
    \caption*{\textit{Reduction Example}: from an instance $\varphi=(x_1\vee \overline{x_2}\vee x_3)\wedge (\overline{x_2}\vee \overline{x_3}\vee x_4)$ of \planarsat to an instance $(G, \calN, 1)$ of \textsc{Planar} \nnctd.}
\end{figure}

We reduce from \planarsat, which is one of the known \NP-complete restrictions of \textsc{Planar 3-SAT}~\cite{Dahlhaus94, Tippenhauer2016}.

\defproblem{\planarsat}{A CNF formula $\varphi=(\calX,\calC)$ such that
(i) every clause contains two or three literals over distinct variables,
(ii) every variable occurs exactly once positively and twice negatively, and
(iii) the variable--clause incidence graph is planar.}{Is $\varphi$
satisfiable?}

\paragraph{The reduction} 
Given an instance $\varphi=(\calX, \calC)$ of \planarsat, we construct \textsc{Planar} \nnctd instance~$(G, \calN, 1)$ in polynomial time, starting with $\varphi=(\calX,\calC)$ as an instance of \planarsat, for an example, see Fig.~\ref{fig:add-on}. 
The associated graph $G_{\varphi}$ has a variable vertex~$x_i$,
two literal vertices $t_i$ and $f_i$ for each variable, and a clause vertex $c_j$ for each clause.
Each~$x_i$ is adjacent to $t_i$ and $f_i$, and each literal vertex is adjacent to the clause vertices representing its occurrences.

First, attach one pendant vertex $x'_i$ to each variable vertex $x_i$ and
one pendant vertex $c'_j$ to each clause vertex $c_j$.
Next, for every pair of a true-literal vertex $t_i$ and a false-literal vertex $f_i$, we reverse their clause--literal adjacencies in $G_{\varphi}$. 
This can be done by simply exchanging the labels of these two vertices. 
After this modification, each vertex $t_i$ is adjacent to the two clauses containing
$\overline{x_i}$, whereas $f_i$ is adjacent to the unique clause containing
$x_i$.

For each variable $x_i$, choose one of its two negative-occurrence clauses,
say $c_j$. Add the edge~$c_jx_i$ and a new vertex $t_i^j$ adjacent to
both $c_j$ and $x_i$. We call $t_i^j$ the \emph{bonus clause-vertex}.
Attach a pendant vertex ${t'}_i^j$ to $t_i^j$, and add an \emph{extra vertex}
$t'_i$ adjacent to both $x_i$ and $t_i^j$. See Figure~\ref{fig:reduction} for an illustration.

These changes preserve planarity. Indeed, $t_i$ has exactly three neighbors
in $G_{\varphi}$. Thus, the edge~$c_jx_i$ can be drawn alongside the path $c_jt_ix_i$ within a face of the given embedding. The vertices $t_i^j$, ${t'}_i^j$, and $t'_i$, together with their incident edges, can be embedded
inside the resulting triangular face.
Pendant vertices can be embedded locally.
The remaining elements can then be embedded without crossings inside the face bounded by this path and the newly added edge~$c_jx_i$, since none of the newly introduced vertices is adjacent to any vertex outside this face (see Fig.~\ref{fig:reduction} for an illustration).
Lastly, we add a disjoint 4-cycle $C$, $V(C)=\{v_0, v_1, v_2, v_3\}.$
This completes the construction of $G$.

Define $\calN$ to consist of the closed neighborhoods of all vertices of $G$, except those of the literal-vertices $t_i,f_i$ and the extra vertices $t'_i$. Thus $\calN$ includes neighborhoods centered at variable vertices, clause vertices, bonus
clause-vertices, pendant vertices, and the 4-cycle vertices. 
Finally, we return~$(G,\mathcal{N},1)$ as an instance of \textsc{Planar} \nnctd and the construction is polynomial in the size of $\varphi$.

The high-level idea behind reversal of the clause--literal adjacencies is the following.
Consider a clause-vertex $c_j$ and a variable-vertex $x_i$ such that the variable $x_i$ appears in the clause corresponding to $c_j$.
If the teaching set of $N[x_i]$ contains a literal-vertex representing an assignment that satisfies this clause, then this literal-vertex should distinguish $N[x_i]$ from $N[c_j]$. 
To ensure this, we reverse the clause--literal adjacencies so that the satisfying literal-vertex is not adjacent to the clause-vertex $c_j$.

\begin{proof}
Let us first show the following direction: if $(G,\calN,1)$ admits a valid non-clashing teaching map, then the \planarsat instance $\varphi$ is satisfiable.

Suppose that $\calN$ admits a non-clashing teaching map $T$ of size at most one.
First, consider neighborhoods of the vertices of the 4-cycle $C$. For indices taken modulo $4$, there exists $v_i\in V(C)$ such that, in order to distinguish $N[v_{i}]$ from $N[v_{i+1}]$ and from $N[v_{i-1}]$, $T(v_i)=\{v_{i+2}\}$, since no other vertex distinguishes both pairs.
In particular, $v_{i+2}$ belongs neither to $N[v_i]$ nor to any
neighborhood centered outside $C$. Consequently, for any
$u\notin V(C)$, $T(u)$ must be a subset of~$N[u]$, i.e. if $T(u)=\{z\}$ with
$z\notin N[u]\cup V(C)$, then $N[u]$ and $N[v_i]$ would not be distinguished, and if $z\in V(C)$, $N[u]$ and $N[v_j]$ are not distinguished, where $v_j$ is a non-adjacent to $z$ vertex of the cycle.

For each clause-vertex $c_j$, a pair $\la N[c_j], N[c'_j] \ra$ can be distinguished only by $T(c_j)\subseteq N[c_j]\setminus \{c_j, c'_j\}$. The same applies for each variable-vertex and each bonus clause-vertex. 

Then, for each variable-vertex $x_i$, there is a vertex $t_i^j$.
Since  $N[t_i^j]\subset N[x_i]$ and $N[x_i]\setminus N[t_i^j]=\{t_i, f_i\}$,
we must have $T(x_i)\cap\{t_i, f_i\}\neq \emptyset$.
Consequently, $T(x_i)$ is either $\{t_i\}$ or $\{f_i\}$.

We now define a truth assignment for $\varphi$ by setting
$x_i=\true$ if $T(x_i)=\{t_i\}$, and
$x_i=\false$ if $T(x_i)=\{f_i\}$.

Consider a clause-vertex $c_j$.
As we observed, $T(c_j)\subseteq N[c_j]\setminus \{c_j, c'_j\}$.
Every vertex in this set belongs to
$N[x_i]$ for at least one variable $x_i$ occurring in $c_j$, i.e.  such a vertex is either a literal vertex, the relevant variable vertex, or a bonus clause-vertex.
So, let us consider such $x_i$.
Since $T(c_j)\subseteq N[x_i]$, the pair $N[c_j],N[x_i]$ must be distinguished by $T(x_i)$.
So, the literal vertex chosen to $T(x_i)$ is therefore absent from $N[c_j]$, which, by the
reversed clause--literal adjacencies, means that the occurrence of $x_i$ in $c_j$ is satisfied. This holds for every clause. 
Hence, $\varphi$ is
satisfiable.

\medskip\noindent
For the reverse direction, suppose that $\varphi$
admits a satisfying assignment.
We define a teaching map~$T$ of size~$1$ for
$(G,\calN,1)$ as follows.

For every variable-vertex $x_i$, set $T(x_i)=
    \begin{cases}
        \{t_i\}, & \text{if }x_i=\true,\\
        \{f_i\}, & \text{if }x_i=\false.
    \end{cases}$
    
For every pendant vertex $u$, $T(u)=\{u\}$.
For every bonus clause vertex, set $T(t_i^j)=\{t'_i\}$.
For every vertex $v_i\in V(C)$, set
$T(v_i)=\{v_{i+2}\}$, with indices modulo $4$.

It remains to assign teaching examples to the clause-vertices.
Consider a clause-vertex $c_j$.
Since the assignment satisfies $\varphi$, there exists
a variable $x_i$ whose assignment satisfies the clause
represented by $c_j$.
If several such variables exist, choose one arbitrarily.
So, if $x_i$ satisfies $c_j$ and is assigned $\true$, by reversed adjacencies, $f_i\in N[c_j]$, and we set $T(c_j)=\{f_i\}$. 

Recall that each variable appears exactly once positively
and exactly twice negatively in $\varphi$.
Hence, for each variable $x_i$, there can be at most one
pair of clauses $c_j,c_{\ell}$ that are both satisfied
by the same assignment to $x_i$.
Such a pair can arise only when $x_i=\false$.
So, if $c_j$ (and then $c_{\ell}$) is satisfied by $x_i=\false$, we set $T(c_{\ell})=\{t_i\}$ and we set $T(c_j)=\{t^j_i\}$ (assuming that among~$c_{\ell}$ and~$c_j$ the unique for each such a pair bonus clause vertex is adjacent to $c_j$). 

We claim that $T$ is a valid non-clashing teaching map
for $(G,\calN,1)$.

Any neighborhood centered in $V(C)$ is distinguished from any neighborhood centered in $V(G)\setminus V(C)$ by the positive teaching example of the latter.
Any two neighborhoods centered in~$V(C)$ are distinguished because each cycle vertex selects its opposite vertex.
The teaching set of a pendant concept belongs only to the
neighborhoods centered at the vertex itself and its parent; the parent is
distinguished from the pendant by teaching set of the parent including neither of them.
Thus all pairs involving a pendant concept are distinguished.

Consider two distinct variable vertices $x_i$ and $x_r$.
By construction, no $t_i$ or $f_i$ belongs to $N[x_r],$
so all such closed neighborhoods are distinguished by their teaching sets. By the same argument, neighborhoods centered in variable-vertices are distinguished from those centered in bonus clause vertices.

Next, consider bonus clause vertices. Neighborhoods of two distinct bonus clause vertices are distinguished, since one is never adjacent to a vertex in the teaching set of the other. The same holds for pairs across bonus clause vertices and clause-vertices, i.e. $t'_i\in T(t_i^j)$, but $t'_i\notin N[c_\ell]$ for each clause-vertex $c_{\ell}$.

Consider two distinct clause-vertices $c_j$ and
$c_{\ell}$.
Their neighborhoods are distinguished because their
teaching sets either belong to the non-intersecting neighborhoods of
distinct variable-vertices, or one of the teaching
sets contains a vertex $t_i^j$ that is not adjacent to
any other clause vertex.

Finally, consider a clause vertex $c_j$ and a variable
vertex $x_p$.
The example in $T(c_j)$ is associated with some variable $x_i$ occurring in $c_j$ and belongs to $N[x_i]$. If $p\ne i$, this example does not
belong to $N[x_p]$, so the pair is distinguished. If $p=i$, then the
example chosen for $c_j$ came from an occurrence of a variable that received an assignment satisfying the clause $c_j$.
In this case, the opposite literal-vertex was selected by $T(x_i)$. Therefore, $T(x_i)\not\subseteq N[c_j]$, and the pair is again distinguished.

Thus, every pair of distinct neighborhoods in $\calN$
is distinguished by their teaching sets, so $T$ is a non-clashing teaching map of size one for $(G,\calN,1)$, completing the
proof.
\end{proof}

A similar result can also be stated for the positive non-clashing teaching dimension.
\begin{theorem}\label{thm:hardplanarplus}
    Given a planar graph $G$ and a set of closed neighborhoods $\calN$, deciding whether $\nnctdpar^+(\calN)=1$ is \NP-complete.
\end{theorem}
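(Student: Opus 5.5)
Membership in \NP{} is immediate: a positive teaching map of size one assigns each concept $N[u]\in\calN$ a single vertex of $N[u]$, and validity can be checked by testing all pairs. Deciding $\nnctdpar^+(\calN)=1$ rather than $\leq 1$ adds nothing, since a positive map of size zero exists only if $|\calN|\leq 1$. For hardness I would reduce from \planarsat{} again, reusing the construction from the proof of Theorem~\ref{thm:hardplanar} essentially unchanged.

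The 4-cycle $C$ in that construction was needed only to force every teaching set outside $C$ to be positive. In the positive setting this holds by definition, so I would drop $C$. The instance is $G$ without $C$, with the same family $\calN$ of neighborhoods centered at variable vertices, clause vertices, bonus clause-vertices and pendant vertices.

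For the forward direction, suppose $T$ is a positive non-clashing map of size one. The key steps are the following.
\begin{itemize}
\item For a clause vertex $c_j$ with pendant $c'_j$, we have $N[c'_j]\subset N[c_j]$. Hence $T(c_j)$ must lie in $N[c_j]\setminus\{c_j,c'_j\}$. The same argument applies to variable vertices and bonus clause-vertices.
\item Since $N[t_i^j]\subset N[x_i]$ and $N[x_i]\setminus N[t_i^j]=\{t_i,f_i\}$, we get $T(x_i)\in\{\{t_i\},\{f_i\}\}$. This defines the truth assignment.
\item The clause argument then transfers verbatim. Every candidate in $T(c_j)$ lies in some $N[x_i]$ with $x_i$ occurring in $c_j$. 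So $T(x_i)$ must avoid $N[c_j]$, and by the reversed adjacencies this means that occurrence satisfies $c_j$.
\end{itemize}

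For the reverse direction I would reuse the map from Theorem~\ref{thm:hardplanar} minus the cycle part, and check that it is positive.
\begin{itemize}
\item Each $T(x_i)$ is $t_i$ or $f_i$, both of which lie in $N[x_i]$.
\item Each pendant $u$ gets $T(u)=\{u\}$.
\item Each bonus clause-vertex gets $T(t_i^j)=\{t'_i\}$, which is positive because $t'_i$ is adjacent to $t_i^j$.
\item Each clause vertex gets a neighboring literal vertex or bonus vertex, so it is positive as well.
\end{itemize}
All pairwise distinctions in the earlier proof that did not involve $C$ go through unchanged.

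The main obstacle is confirming that removing $C$ breaks nothing. The earlier proof invoked $C$ only to derive $T(u)\subseteq N[u]$, and that step is now a hypothesis. I would still recheck a few pairs explicitly:
\begin{itemize}
\item a pendant pair such as $\langle N[{t'}_i^j],N[t_i^j]\rangle$;
\item the pair $\langle N[x'_i],N[x_i]\rangle$;
\item bonus clause-vertex versus variable vertex, which is distinguished by $T(x_i)\in\{t_i,f_i\}$ because $N[t_i^j]$ contains neither $t_i$ nor $f_i$.
\end{itemize}
If any such pair failed, I would fall back to keeping the construction with $C$. It remains valid positively, since $T(v_i)=\{v_i\}$ or a neighbor suffices in a twin-free $C_4$.
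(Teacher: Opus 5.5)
\textbf{Overall.} Your main line is exactly the paper's proof. You delete $C$, observe that a pendant neighbor $u'$ of $u$ already forces $T(u)\subseteq N(u)\setminus\{u'\}$ (which is all $C$ was used for), and reuse both directions of the proof of Theorem~\ref{thm:hardplanar}. However, there are three problems, one of which is a genuine gap.

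\textbf{Your fallback is false.} It in fact shows that deleting $C$ is forced rather than optional. Since $N[v_i]=V(C)\setminus\{v_{i+2}\}$, a positive singleton $T(v_i)=\{x\}$ separates $N[v_i]$ from exactly one other cycle neighborhood, namely the one missing $x$. So four singletons cover at most four of the six pairs inside $C$, and keeping $C$ would make every instance a no-instance.

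\textbf{Your second forward bullet is misstated.} In fact ${t'}_i^j\in N[t_i^j]\setminus N[x_i]$ and $N[x_i]\setminus N[t_i^j]=\{t_i,f_i,x'_i\}$. The conclusion survives: your first bullet gives $T(t_i^j)\subseteq\{c_j,x_i,t'_i\}\subseteq N[x_i]$, so $T(x_i)$ must meet $\{t_i,f_i,x'_i\}$, and $x'_i$ is already excluded.

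\textbf{The genuine gap is clause versus clause in the reverse direction,} a pair you did not list. It fails if each clause picks its satisfying variable arbitrarily. Suppose the following all hold:
\begin{itemize}
\item $c_j$ and $c_\ell$ both contain $\overline{x_i}$ and $\overline{x_r}$, and both variables are false;
\item the bonus vertex of $x_i$ is at $c_j$, and that of $x_r$ is at $c_\ell$;
\item $c_\ell$ picks $x_i$, while $c_j$ picks $x_r$.
\end{itemize}
Then $T(c_\ell)=\{t_i\}$ and $T(c_j)=\{t_r\}$, and both examples lie in $N[c_j]\cap N[c_\ell]$. This configuration occurs, for instance, in $(\overline{x_1}\vee\overline{x_2})\wedge(\overline{x_1}\vee\overline{x_2}\vee x_3)\wedge(x_1\vee\overline{x_3})\wedge(x_2\vee\overline{x_3})$ under the all-false assignment. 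This is a valid planar instance.

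The earlier proof argues that the two examples lie in neighborhoods of distinct variable vertices, but that does not exclude this case. So the flaw is inherited, not caused by removing $C$. You nevertheless rely on it when you say all distinctions ``go through unchanged.''

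A priority rule repairs it:
\begin{enumerate}
\item Give every clause that is the bonus clause of some false variable $x_i$ the bonus vertex of $x_i$ as its example.
\item Otherwise, use $f_i$ for a true variable $x_i$ occurring positively.
\item Only in the remaining case use $t_i$.
\end{enumerate}
A clause that receives $t_i$ then shares $t_i$ only with the bonus clause of $x_i$. That clause received a bonus vertex adjacent to no other clause. All other pairwise checks, and positivity, are unchanged.
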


Indeed, if $u'$ is a pendant neighbor of $u$, then distinguishing $N[u]$ from $N[u']$ in a positive non-clashing teaching map of size at most one requires $T(u)\subseteq N(u)\setminus\{u'\}$.
Thus, the 4-cycle can be omitted, and the forward implication follows as in the proof of Theorem~\ref{thm:hardplanar}.
Conversely, the teaching map constructed for the remaining neighborhoods in $\calN$ is already positive and non-clashing.

\section{Conclusion}
\label{sec:conclusion}

We proved tight upper bounds for both the general and positive
non-clashing teaching dimensions of closed neighborhoods in
planar graphs. We also showed that deciding whether either
dimension is at most one remains \NP-complete, even for planar
graphs.
Our results prove the inequality between $\nctd$ and $\operatorname{VCdim}$ that was conjectured for this family by \cite{kirkpatrick2019optimal}, and highlighted as an open question later by~\cite{chalopin2024nonclashing}.

However, whether
$\nctd(\calC)\leq\operatorname{VCdim}(\calC)$
holds for every finite concept class $\calC$ remains open.
A natural direction is to extend our results to graph
classes of linear neighborhood complexity, see, e.g., \cite{reidl2019characterising}. Such classes
have bounded $\operatorname{VCdim}$, making them natural candidates
for studying this relationship beyond planar graphs.

\cite{moran2016sample, doliwa2014recursive} studied
relationships between recursive teaching dimension,
$\operatorname{VCdim}$, and sample compression.
It would be interesting to determine whether similar
relationships hold for non-clashing teaching and whether
bounds on $\nctd$ could lead to improved sample compression schemes.

\acks{The authors would like to thank Fionn {Mc Inerney} for insightful comments and valuable feedback. This work has been supported by Helsinki Institute for Information Technology HIIT.}

% Replace yourbibfile with the name of your BibTeX database and uncomment:
 \bibliography{bib}

%\appendix

% \crefalias{section}{appendix} % uncomment if you are using cleveref

%\section{Deferred proofs}

\end{document}